\documentclass[conference]{IEEEtran}
\IEEEoverridecommandlockouts

\usepackage[utf8]{inputenc}
\usepackage{cite}
\usepackage{amsmath,amssymb,amsfonts}
\usepackage[linesnumbered,ruled,vlined]{algorithm2e}
\usepackage{array}
\usepackage{multirow}
\usepackage{booktabs}
\usepackage[normalem]{ulem}
\usepackage{arydshln}
\usepackage{graphicx}
\usepackage{subcaption}
\usepackage{wrapfig}
\usepackage{enumitem}
\usepackage{amsthm}
\usepackage{xcolor}
\usepackage{xspace}
\usepackage{comment}
\usepackage{textcomp}
\usepackage{url}

\def\BibTeX{{\rm B\kern-.05em{\sc i\kern-.025em b}\kern-.08em
    T\kern-.1667em\lower.7ex\hbox{E}\kern-.125emX}}

\newcommand{\oracle}{\textit{Oracle}\xspace}

\newcommand{\window}{b}

\newcommand{\argmax}{\textit{argmax}}
\newcommand{\benefit}{\textit{benefit}\xspace}
\newcommand{\query}{\textit{query}\xspace}
\newcommand{\gain}{\textit{gain}\xspace}
\newcommand{\match}{\textit{Match}\xspace}

\newcommand{\DD}{\texttt{Oracle}\xspace}
\newcommand{\subopt}{\texttt{SubOpt}\xspace}
\newcommand{\perbacco}{\texttt{pERbacco}\xspace}
\newcommand{\perbac}{\texttt{pERbac}\xspace}

\newcommand{\cora}{\texttt{Cora}\xspace}
\newcommand{\camera}{\texttt{Camera}\xspace}
\newcommand{\funding}{\texttt{Funding}\xspace}
\newcommand{\voters}{\texttt{Voters}\xspace}
\newcommand{\wdc}{\texttt{WDC-80}\xspace}
\newcommand{\synth}{\texttt{Synth10k}\xspace}

\newcommand{\blue}[1]{\textcolor{blue}{#1}}

\newtheorem{theorem}{Theorem}[section]
\newtheorem{lemma}{Lemma}[section]
\newtheorem{problem}{Problem}
\newtheorem{example}{Example}
\newtheorem{definition}{Definition}
\newtheorem{proposition}{Proposition}

\begin{document}
\title{Entity Resolution via Batched Oracle Queries}

% \author{\IEEEauthorblockN{Anonymous}}
\author{
\IEEEauthorblockN{
Lorenzo Balzotti\IEEEauthorrefmark{1},
Donatella Firmani\IEEEauthorrefmark{1},
Luca Gagliardelli\IEEEauthorrefmark{2},
Giovanni Simonini\IEEEauthorrefmark{3}
}
\IEEEauthorblockA{\IEEEauthorrefmark{1}Sapienza University of Rome, Rome, Italy\\
Emails: \{lorenzo.balzotti, donatella.firmani\}@uniroma1.it\\
ORCID: 0000-0001-6191-9801, 0000-0003-0358-3208}
\IEEEauthorblockA{\IEEEauthorrefmark{2}Università eCampus, Italy\\
Email: luca.gagliardelli@uniecampus.it; ORCID: 0000-0001-5977-1078}
\IEEEauthorblockA{\IEEEauthorrefmark{3}University of Modena and Reggio Emilia, Modena, Italy\\
Email: simonini@unimore.it; ORCID: 0000-0002-3466-509X}
}

\maketitle

\thispagestyle{plain}
\pagestyle{plain}

\begin{abstract}
We consider an oracle that processes a limited batch of records at a time and clusters those that refer to the same real-world entity.
We study how to interrogate such an oracle to resolve entities in a dataset whose size is far larger than a single batch, and where no batch is guaranteed to contain all records of any given entity. %perché? Ci possono essere entità piccole con pochi doppioni
We aim at a pay-as-you-go approach, to have full control over the costs (the number of oracle consults), while achieving the highest possible recall at every step.
% To have full control over the costs of consulting the oracle, the batches have to be presented to it progressively.
% while achieving the highest possible recall at every step.
We formally cast this problem as \textit{batched entity resolution}, prove that selecting optimal batches is NP-hard, and provide an optimal solution under a natural condition on entity sizes.
%Finally, we evaluate our approach on eight real-world datasets, demonstrating its efficacy and superiority over adapted state-of-the-art baselines.
Finally, we evaluate our approach on six datasets and show its superiority over state-of-the-art baselines.
\end{abstract}

\begin{IEEEkeywords}
Entity resolution, oracle queries, progressive resolution, batched clustering.
\end{IEEEkeywords}

\section{Batched Entity Resolution}

Entity Resolution (ER) is the task of identifying records that refer to the same real-world entity (in other words: that \emph{match}).
It is central to data integration and knowledge base construction, and a core challenge in large-scale data management.

State-of-the-art ER systems~\cite{DBLP:journals/csur/ChristophidesEP21, DBLP:journals/vldb/LiLSDT23, papadakis_exploration} rely on a common abstraction: a \emph{binary matching function} that takes two records as input and returns a match or non-match decision.
This pairwise abstraction underlies classical rule-based approaches, supervised learning methods~\cite{DBLP:journals/is/Mandilaras0GSTG21, DBLP:journals/cacm/DoanKCGPCMC20}, and recent neural models~\cite{DBLP:journals/vldb/LiLSDT23}.
As a result, ER pipelines are typically organized around pairwise comparisons, whose number grows quadratically with the dataset size.
For large collections, controlling this cost is therefore a primary concern, motivating extensive work on blocking~\cite{DBLP:journals/is/Mandilaras0GSTG21, javdani2019deepblock}, and prioritization techniques~\cite{DBLP:journals/tkde/WhangMG13, DBLP:conf/internetware/YuHZLS20, DBLP:journals/tkde/SimoniniPPB19, papadakis_exploration, firmani2016online} to reduce the number of comparisons.

\begin{figure}[!t]
  \centering
  \begin{subfigure}[t]{0.7\columnwidth}
    \centering
    \includegraphics[width=\linewidth]{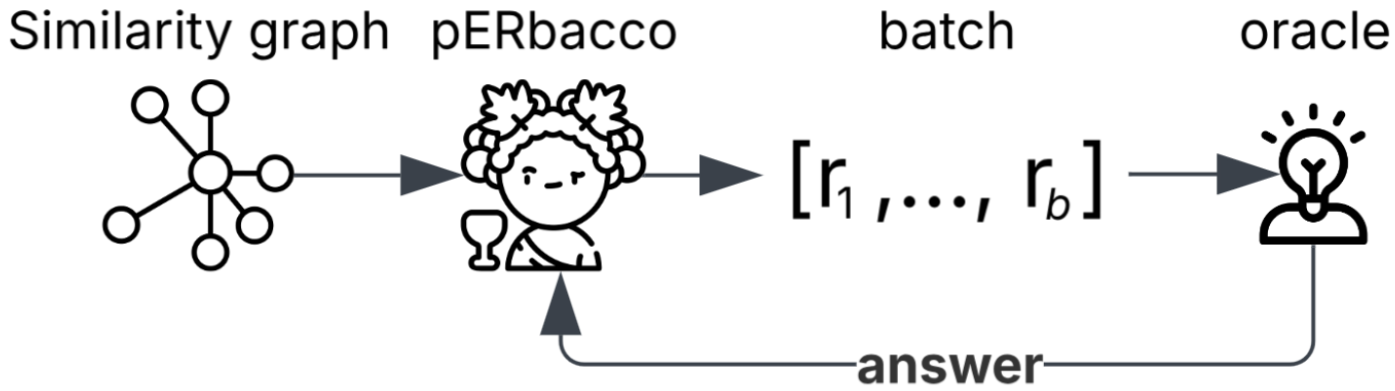}
    \caption{}
    \label{fig:intro_a}
  \end{subfigure}
  \begin{subfigure}[t]{0.58\columnwidth}
  \centering\includegraphics[width=\linewidth]{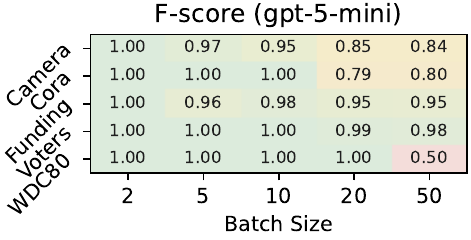}
    \caption{}
    \label{fig:intro_b}
  \end{subfigure}
  %\hfill
  \begin{subfigure}[t]{0.4\columnwidth}
    \centering
    \includegraphics[width=\linewidth]{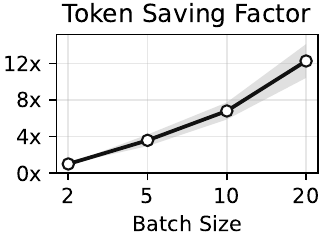}
    \caption{}
    \label{fig:intro_c}
  \end{subfigure}
  \caption{(\subref{fig:intro_a}) Entity Resolution workflow with an oracle: \perbacco selects $b$ records and passes that batch as a single invocation of the oracle.
  (\subref{fig:intro_b}) F-score of an LLM (GPT-5 mini) employed as a matching oracle, as a function of the batch size.
  (\subref{fig:intro_c}) Average token savings computed across all datasets w.r.t. the pairwise case.}
  \label{fig:intro}
\end{figure}

While pairwise matching has been the dominant abstraction for over half a century~\cite{fellegi1969theory}, many modern approaches~\cite{DBLP:conf/icml/LeeLKKCT19, DBLP:conf/coling/WangCLCHSWZ25, crowder, DBLP:conf/cvpr/SchroffKP15} operate at a higher level of granularity by jointly analyzing a \emph{batch} of items to produce a clustering.
In such models, the cost of processing a batch typically scales with the batch size $b$, while a single clustering result can resolve the match status of all $\binom{b}{2}$ record pairs induced by the batch.

For instance, \emph{Set Transformer} architectures~\cite{DBLP:conf/icml/LeeLKKCT19} can be used to learn permutation-invariant embeddings of the elements in a batch, followed by a clustering head that assigns (entity) group memberships based on joint reasoning over a set of records.
Similarly, vision-based face and identity clustering systems process bounded collections of images (e.g., photo albums) and directly group them into entity clusters without decomposing the task into independent pairwise decisions~\cite{DBLP:conf/cvpr/SchroffKP15}.
In human-in-the-loop settings, crowdsourcing-based ER approaches often present workers with small groups of records and ask them to cluster the records directly, implicitly leveraging transitivity and global context~\cite{waldo, crowder}.
Also Large Language Models (LLMs)~\cite{DBLP:journals/corr/abs-2203-02155} provide a further example of this paradigm:
an LLM can be prompted with a batch of records and asked to jointly analyze them, performing ER by returning a clustering of the input~\cite{DBLP:conf/coling/WangCLCHSWZ25}.
In Figure~\ref{fig:intro}\subref{fig:intro_b}, we randomly sampled 100 batches of close records in a similarity graph (see Section~\ref{sub:similarity_graph} for a formal definition) and computed the F-score obtained using OpenAI GPT-5 mini, with ten positive and ten negative record pairs as few-shot examples in the prompt.
We observe that the F-score is equal or very close to 1 for batch sizes up to 10, while in some datasets it degrades for larger batch sizes.

These models act as \emph{oracles} operating under inherent constraints: LLMs face context window limits and per-token costs; crowdsourcing systems must present workers with manageable tasks; neural models process fixed-size batches.
In all cases, budget constraints (monetary, computational, or human) make it infeasible to exhaustively query the entire dataset in a single step.
This naturally aligns with the \emph{pay-as-you-go} paradigm~\cite{DBLP:journals/tkde/WhangMG13}: progressively improve resolution quality while maintaining control over costs.
In Figure~\ref{fig:intro}\subref{fig:intro_c} we estimate the average token savings computed across all datasets with respect to the pairwise case.
In particular, we consider the same prompt that yields the results of Figure~\ref{fig:intro}\subref{fig:intro_b}, and a number of queries equal to the minimum required to find all matches (see Section~\ref{sub:bounds_minimum_number_of_queries}).

The main challenge in this setting becomes to determine how to allocate the available budget so as to maximize the benefit (e.g., recall) obtained at each step.
Unlike pairwise ER where the search space is well-understood~\cite{papadakis_exploration}, batched ER involves exponentially many possible batch selections, and the benefit of querying a batch depends on all previous queries.

To study this problem in a model-agnostic manner, we abstract the resolution mechanism as an oracle that, given a bounded-size set (a batch) of records, returns a partition of that set into clusters corresponding to real-world entities.
At any point during the resolution process, the task is therefore to decide which batch of records should be queried next so as to maximize the number of matches discovered for the budget spent so far.
We refer to this task as \emph{progressive batched entity resolution} and assume a consistent oracle that returns only correct match decisions. An overview of our proposed approach \perbacco is reported in Figure~\ref{fig:intro}\subref{fig:intro_a}.

We note that this abstraction deliberately idealizes the behavior of the oracle.
Handling noisy or inconsistent oracle outputs (e.g., errors arising from human annotation or imperfect models) is an important and challenging problem in its own right, and has been studied extensively in the context of crowdsourced ER~\cite{crowd_error_01,crowd_error_02,crowd_error_03,crowd_error_04}.
In this work, we focus on the algorithmic problem of allocating oracle queries under budget constraints, and leave robustness to oracle errors to future work.

%In this paper, we formalize this abstraction, analyze its computational properties, and design algorithms that exploit it.
%By moving beyond the strictly pairwise paradigm, we enable dataset-level strategies that can extract substantially more information per query under fixed budget

\smallskip
\noindent\textbf{Contributions.}
By moving beyond the strictly pairwise paradigm, this work makes the following contributions:
\begin{itemize}[leftmargin=*, itemsep=0pt, topsep=0pt]
    \item A formal definition of the \emph{progressive batched entity resolution} problem, where an oracle jointly resolves batches of records and the objective is to progressively maximize resolution quality under a limited budget of oracle calls.
    %\item A proof that selecting an optimal sequence of batches is NP-hard, establishing the intrinsic computational difficulty of allocating oracle queries at the dataset level.
    \item A proof that an optimal sequence of batches does not always exist, and that selecting the next batch by maximizing the estimated gain is NP-hard, establishing the intrinsic computational difficulty of allocating oracle queries at the dataset level. This result generalizes the known findings for pairwise ER.
    \item An approximate solution that guides batch selection to maximize the benefit (number of newly discovered matches) obtained at each oracle invocation.
    \item An experimental evaluation on real-world and synthetic datasets, showing the effectiveness of the proposed approach and its superiority over state-of-the-art baselines under comparable budgets.
\end{itemize}

\smallskip
\noindent\textbf{Paper organization.}
Section~\ref{sec:theoretical_results} presents formal problem definitions and theoretical results including NP-hardness proofs and query bounds.
Section~\ref{sec:perbacco} describes our \perbacco algorithm.
Section~\ref{sec:experiments} presents experimental evaluation.
Section~\ref{sec:related_work} discusses related work, and Section~\ref{sec:conclusions} concludes.

\section{Preliminaries and theoretical results}\label{sec:theoretical_results}

This section formalizes batched ER and its progressive version, shows that optimal progressive schedules may not exist, and relates the problem to two NP-hard selection tasks. The lower bound on the number of queries is used in Section~\ref{sec:experiments} as a stopping reference, while the sufficient condition in Subsection~\ref{sub:optimal_solution} motivates the suboptimal baseline used in the experiments.

\subsection{Problem Definition}\label{sub:problems}

The input is a dataset $R$ in which different records may represent the same entity. If two records $r,r'$ represent the same entity, we say that $r$ \emph{matches} $r'$.
Let us denote by $\sim$ the \emph{match} relation. Accordingly, for $r,r'\in R$, we write $r\sim r'$ if and only if $r$ matches $r'$. Note that $\sim$ is reflexive, symmetric, and transitive, thus it is an equivalence relation. Let us denote by $\not\sim$ the \emph{non-match} relation, which denotes two records that do not match each other. Observe that $\not\sim$ is not a relation of equivalence, as it is not reflexive.
A \emph{match edge} is a matching pair in the complete graph over $R$; this terminology is independent of the similarity graph introduced later.

We assume access to an oracle
$
\oracle : R^b \to \{\text{\emph{0}}, \text{\emph{1}}\}^{\binom{b}{2}},
$
which, given a batch $B \subseteq R$ of $b \ge 2$ records, returns a match/non-match decision for every unordered pair of records in $B$.
In other words, a single oracle query over a batch provides complete pairwise matching information within that batch.

Let $Q = (q_1, q_2,\ldots,q_T)$ be a sequence of batch queries.
We denote by $\sim_Q$ the equivalence relation induced by the match information obtained from the queries in $Q$; that is, for $r,r' \in R$, we write $r \sim_Q r'$ if the \oracle answers to the queries in $Q$ allow us to infer that $r$ matches $r'$.
Let $Repr_Q = R/\sim_Q$ be the quotient set of the record set $R$ by $\sim_Q$, and let $[r]_\sim$ denote the equivalence class of $r$.
If, based on the answers to the queries in $Q$, we can infer that two representatives $r,r' \in Repr_Q$ do not match, we write $r \not\sim_Q r'$ (and $r' \not\sim_Q r$).
Note that $\not\sim_Q$ is not an equivalence relation. Moreover, two relations $\sim_1$ and $\sim_2$ are considered equivalent if they coincide on all pairs, and we write $\sim_1 = \sim_2$.

To better illustrate this notation, let us show two opposite examples.
If $Q=\emptyset$, then $[r]_{\sim_Q} = \{r\}$ for each record $r$, and there do not exist two records $r,r'$ such that $r\not\sim_Q r'$. Conversely, if $Q$ contains all pairs in ${R\choose 2}$, then $\sim_Q = \sim$, and $r' \in [r]_{\sim_Q}$ if and only if $r'\sim r$.

We are now ready to formalize the entity resolution problem via batch queries. Given an integer $b$, we define a \emph{$b$-batch} as a batch containing at most $b$ records.

\begin{problem}[Batched Entity Resolution]\label{prob:batch_entity_resolution_BER}
    Given a set of records $R$, an oracle accessing $\sim$, an integer $b$, find a minimal sequence $Q$ of $b$-batches such that $\sim_Q = \sim$ and $\not\sim_Q =\not\sim$.
\end{problem}

Given a sequence of batches $Q = (q_1,q_2,\ldots,q_T)$, we define $\match_Q$ as the number of match edges discovered by $Q$. Moreover, for $t\leq T$, we denote by $Q_t$ the subsequence $(q_1,q_2,\ldots,q_t)$.

\begin{definition}
    Let $R$ be a set of records, $b$ an integer, $Q$ a sequence of $T$ $b$-batches. We say that $Q$ is a \emph{$b$-optimal for $R$} if
    \begin{itemize}
        \item for any sequence $Q'$ of $b$-batches, $\match_{Q_t}  \geq \match_{Q'_t}$, for all $t\leq T$,
        \item $\match_{Q}$ equals the number of match edges in $R$.
    \end{itemize}
\end{definition}

%\blue{In what follows, if we are not interested in specifying $Q$, or it is clear from the context, then we easily omit it. }

\begin{theorem}\label{th:b_optimal_solution_not_exist}
    A $b$-optimal solution for $R$ does not always exist for every integer $b$ and set of records $R$.
\end{theorem}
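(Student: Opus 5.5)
We prove the statement with an explicit counterexample. We fix a small $b$ and a small record set $R$ for which greedy-best prefixes conflict. The idea is that the first query achieving the maximum $\match_{Q_1}$ is forced to lie inside a large entity. However, any sequence that starts this way is strictly beaten at a later step $t$ by another sequence that started differently.

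We choose $b=3$ and let $R$ consist of one entity $A$ of size $4$ and two entities $C,D$ of size $2$ each, so $|R|=8$. The count of match edges is $\binom{4}{2}+1+1=8$. Transitivity matters here: a query can also discover edges implied by earlier answers.

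\textbf{Step 1.} For $t=1$, a $3$-batch discovers at most $3$ match edges, namely three records of $A$. Every other batch discovers at most one edge. Hence any $b$-optimal $Q$ must have $q_1\subseteq A$ with $|q_1|=3$, and $\match_{Q_1}=3$.

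\textbf{Step 2.} We bound what such a sequence achieves at $t=2$ and $t=3$. After $q_1=\{a_1,a_2,a_3\}$, the best $q_2$ contains some $a_i$ together with $a_4$. This adds $3$ edges by transitivity, giving $6$. The third slot in $q_2$ can only be spent uselessly, since a single record of $C$ or $D$ contributes no edge. So $\match_{Q_2}\le 6$. At $t=3$, one batch holds at most three records, so it cannot cover both pairs $C$ and $D$. Thus $\match_{Q_3}\le 7$, and completing all $8$ edges requires $t=4$.

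\textbf{Step 3.} We exhibit a competing sequence $Q'$ whose value at some $t$ exceeds every sequence starting as in Step 1. One option is $q'_1=\{a_1,a_2,\cdot\}$ mixed with other records. If the first attempt only yields ties, we enlarge the instance, for example with $b=4$, one entity of size $5$, and several pairs, or with $b=3$, an entity of size $4$, and an entity of size $3$. The aim is a case where the greedy first batch wastes capacity at a later step, while a batch splitting the large entity as $2+2$ and finishing it together with the small entities wins at $t=2$ or $t=3$.

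\textbf{Main obstacle.} The hard step is finding the instance and verifying exhaustively, by a case analysis over the few shapes of batch intersections with the entities, that every greedy-prefix sequence is strictly dominated at some $t$. If the small instances only produce ties, we will move to the $b=4$ family. There the step-1 optimum is forced to be $4$ records of the big entity. We then check that at $t=2$ the greedy sequence reaches at most $\binom{5}{2}=10$, while an alternative reaches more than the greedy sequence at $t=3$. This works because the alternative's batches pack a $2$-record link into the big entity together with two disjoint pairs.
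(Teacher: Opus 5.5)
Your proposal does not prove the theorem. Step~3 is a plan for searching for a counterexample, not a counterexample, and every concrete instance you name actually admits a $b$-optimal schedule.

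\textbf{Your first instance.} Take $b=3$, $|A|=4$, $|C|=|D|=2$. The bounds in your Step~2 are pointwise maxima over \emph{all} sequences, not just over sequences starting inside $A$.
At $t=2$, two $3$-batches cannot yield $7$ edges. By transitivity, the number of known edges inside $A$ lies in $\{0,1,2,3,6\}$, so $7$ would have to be $6+1$. But completing $A$ already uses at least $5$ of the $6$ slots, leaving no room for a pair.
At $t=3$, three batches cannot yield $8$ edges. The pairs $\{c_1,c_2\}$ and $\{d_1,d_2\}$ must sit in two distinct batches, each with one free slot. That leaves a single batch to connect all four records of $A$, which is impossible.
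So the greedy values $3,6,7,8$ are optimal at every $t$, and this instance has a $b$-optimal schedule.

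\textbf{Your fallback instances.} They fail in the same way. With $b=3$ and entities of sizes $4$ and $3$, the schedule $\{a_1,a_2,a_3\},\{a_1,a_4\},\{b_1,b_2,b_3\}$ attains the maxima $3,6,9$. With $b=4$, one entity of size $5$, and pairs, your claim that greedy reaches at most $10$ at $t=2$ is false: $q_2=\{a_1,a_5,c_1,c_2\}$ gives $11$. Greedy then keeps attaining the pointwise maxima $6,11,13,15,\dots$. Also, a $4$-batch cannot hold a $2$-record link plus two disjoint pairs, since that is six records.

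\textbf{The missing idea.} Forcing $q_1$ into the largest entity does not create a conflict by itself. You need two times $t_1<t_2$ whose maxima can only be reached by incompatible strategies:
\begin{itemize}
\item \emph{splitting} an entity over two batches, which costs an overlap slot but lets an early batch be filled with productive records;
\item \emph{packing} whole entities with no overlap, which is slot-efficient and therefore wins later.
\end{itemize}

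The paper realizes this with $b=5$, three entities of size $3$ and four of size $2$, for $13$ edges in total. The sequence $\{a_1,a_2,a_3,b_1,b_2\},\{b_1,b_3,c_1,c_2,c_3\}$ gives $9$ edges at $t=2$.

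On the other hand, $12$ edges are attainable at $t=3$ by three batches, each a whole triangle plus a whole pair. Any sequence achieving this needs all three triangles and three pairs complete, because a size-$3$ entity contributes $0$, $1$ or $3$ edges, so $12=9+3$ is the only decomposition. That requires $15$ slots, exactly the capacity of three batches, with no slack. Hence every batch is exactly one triangle plus one pair, and only $8$ edges are known at $t=2$. So no single sequence can reach both $9$ at $t=2$ and $12$ at $t=3$.

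Note that in this instance $t=1$ is a tie ($4=4$). The incompatibility sits between $t=2$ and $t=3$, not at the first query where you were looking for it.
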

\begin{proof}
    Let $b=5$ and $R$ consist of the following 7 entities: $\{a_1,a_2,a_3\}$, $\{b_1,b_2,b_3\}$, $\{c_1,c_2,c_3\}$, $\{d_1,d_2\}$, $\{e_1,e_2\}$, $\{f_1,f_2\}$, $\{g_1,g_2\}$.
    %It is simply to construct two sequences $Q$ and $Q'$ of $5$-batches satisfying
    Let $Q$ be the sequence of batches consisting in $Q_1 = \{a_1,a_2,a_3,d_1,d_2\}$, $Q_2 = \{b_1,b_2,b_3,e_1,e_2\}$, $ Q_3 = \{c_1,c_2,c_3,f_1,f_2\}$, and $Q_4 = \{g_1,g_2\}$,
    and $Q'$ consisting in $Q'_1 = \{a_1,a_2,a_3,b_1,b_2\}$, $Q'_2 = \{b_1,b_3,c_1,c_2,c_3\}$, $Q'_3 = \{d_1,d_2,e_1,e_2\}$, $ Q'_4 =\{f_1,f_2,g_1,g_2\}$.
    It holds that $\match_{Q_1} = \match_{Q'_1} = 4$, $\match_{Q_2} = 8 < 9 = \match_{Q'_2}$, $\match_{Q_3} = 12 > 11 = \match_{Q'_3}$, and $\match_{Q_4} = \match_{Q'_4} = 13$. Since the number of match edges is 13, and there does not exist $Q''$ satisfying $\match_{Q''_1} \geq 4$, $\match_{Q''_2} \geq 9$, and $\match_{Q''_3} \geq 12$, then the thesis holds.
\end{proof}

Since a $b$-optimal solution cannot always be guaranteed, we aim to maximize the increase in match edges at each query in a progressive manner.

\begin{problem}[Progressive Batched Entity Resolution]\label{prob:PBER}
    Given a set of records $R$, an integer $b$, and a sequence of queries $Q$, find a $b$-batch $q$ that maximizes $\match_{Q \circ q}$, where $Q\circ q$ denotes the sequence $Q$ extended by $q$.
\end{problem}

\subsection{Batched Entity Resolution is NP-hard}\label{sub:BER_NP-hard}

To prove that Problem~\ref{prob:batch_entity_resolution_BER} is NP-hard, we first introduce the \emph{bin packing problem} (also known as the \emph{one-dimensional cutting stock problem}), which is a classical NP-hard problem~\cite{gilmore1961linear}. The bin packing problem can be stated as follows.

\begin{problem}[Bin packing problem]\label{prob:bin}
    Given a set $I$ of items, a size $s(i) \in \mathbb{N}$ for each $i \in I$, and a positive integer bin capacity $b$, find the minimum integer $\ell$ such that there exists a partition of $I$ into disjoint sets $I_1,\dots, I_\ell$ where the sum of the sizes of the items in each $I_j$ is $b$ or less.
\end{problem}

Problem~\ref{prob:batch_entity_resolution_BER} is NP-hard by reduction from the restricted bin-packing variant in which all bins must be exactly full, i.e., $k$-way number partitioning~\cite{korf2009multi}. Given such an instance, use the items as entities and the bin capacity as the batch size. Any minimal batch schedule that fully resolves these entities induces a valid exact bin packing, and vice versa.

\subsection{Minimum Number of Queries}\label{sub:bounds_minimum_number_of_queries}

In this subsection, we focus on upper and lower bounds on the minimum number of queries required to discover all match edges in Problem~\ref{prob:batch_entity_resolution_BER}.
Indeed, in a progressive approach, only the optimization of match edges is considered, not that of non-match edges.
It can be proven by the same reduction that finding this number is an NP-hard problem.
These bounds are theoretically significant, and our experiments in Section~\ref{sec:experiments} show that they provide estimates on real datasets with an error below 3\% (see Table~\ref{tab:upper_lower_bound}).
We remark that the upper bound requires the solution of an instance of the bin packing problem, while the lower bound can be computed in $O(|R|)$ time.
Therefore, the lower bound is not only efficient to compute but also a tight approximation of the true value.

%Let $\mathcal{E}$ denote the set of all entities,  $e_1, \ldots e_s$ denote the entities in non-increasing order of size. We first need to understand the minimum number of batches we need to discover all matches for an entity.

First, we determine the minimum number of $b$-batches required to discover all matches for a single entity. Consider an entity $e = \{r_1,\ldots, r_8\}$ of 8 records, and let us assume that $b = 3$. To find all matches in $e$, four $3$-batches are required.
Indeed, we can start by querying the \oracle with the $3$-batches $\{r_1,r_2,r_3\}$ and $\{r_4,r_5,r_6\}$. After this, we have only two records not already visited, $r_7$ and $r_8$,
but at least one record from the first $3$-batch must be compared with one from the second $3$-batch. Thus, the following batch can be $\{r_1,r_4,r_7\}$. After this, the standard ER equivalence closure tells us that all records $r_1,\ldots,r_7$ match each other. Finally, we need a last $3$-batch containing a record among $r_1,\ldots,r_7$ and $r_8$. No solution uses fewer than three $3$-batches full and one with only two records.

We want to generalize the example. We need some definitions.

\begin{definition}\label{def:recursive_modulo}
    Let $x$ and $b$ be two positive integers. We define the \emph{recursive rest of $x$ modulo $b$} as
    \begin{equation}
        r^{rec}_b(x) =
        \begin{cases}
            x, &\text{if $x < b$,}\\
            r^{rec}_b(n+x'), &\text{if $x = nb+x'$, for $x'<b, n\geq1$.}
        \end{cases}
    \end{equation}
We define the \emph{recursive quotient of $x$ modulo $b$} as
    \begin{equation}
        q^{rec}_b(x) =
        \begin{cases}
            0, &\text{if $x<b$,}\\
            n+q^{rec}_b(n+x'), &\text{if $x = nb+x'$, for $x'<b, n\geq1$.}
        \end{cases}
    \end{equation}
\end{definition}

%Note that $b>0$ and $x>0$ imply $r^{rec}_b(x)\geq1$.
We observe that $r^{rec}_b(x) = 0$ if and only if $x=0$, for any integer $b$. Let us apply Definition~\ref{def:recursive_modulo} to the previous example.
We have $b=3$ and $x = 8$, where $x$ denotes the cardinality of the entity $e$. We have $r^{rec}_b(x) = r^{rec}_3(8) = 2$ and $q^{rec}_b(x) = q^{rec}_3(8) = 3$.
To find all matches in $e$ we needed exactly three batches completely full of records in $e$ and one batch with exactly two records in $e$. So, before finding all matches in $e$, we first reduced it to a set of only 2 records. We want to formalize this concept.

Let $\mathcal{E}$ denote the set of entities in $R$, i.e., the set of distinct real-world entities in $R$. By our notation $\mathcal{E}=R/\sim$. Additionally, for an entity $e\in\mathcal{E}$ let $|e|$ indicate its cardinality, i.e., $|e| = |\{r\in R \,|\, r\sim e\}|$. We can now formally define the \emph{representatives} of an entity.

\begin{definition}\label{def:reduction}
    Let $R$ be a set of records, $e$ an entity in $\mathcal{E}$, and $Q$ a sequence of queries. We say that \emph{$e$ is reduced to $\ell$ representatives by $Q$} if there exist $e_1, e_2,\ldots, e_\ell$ records in $R$ such that, for every $r\sim e$, there is a unique $1\leq i\leq \ell$ satisfying $r \sim_Q e_i$.
\end{definition}

The records $e_1, e_2,\ldots, e_\ell$ are called \emph{representatives of $e$}. We generalize the previous example in the following lemma.

\begin{lemma}\label{lemma:representatives}
    Let $R$ be a set of records, $e$ an entity in $\mathcal{E}$, and $b$ an integer.
    To reduce $e$ to $r^{rec}_b(|e|)$ representatives, at least $q^{rec}_b(|e|)$ $b$-batches are required.
    Moreover, if exactly $q^{rec}_b(|e|)$ $b$-batches are used, then each of these batches consists exclusively of representatives of $e$.
\end{lemma}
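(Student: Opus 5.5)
We track a potential: the number of representatives (equivalence classes of $\sim_Q$ restricted to $e$), which starts at $|e|$. A $b$-batch containing $k_i$ current representatives of $e$ (records from distinct classes; two records from the same class count once) merges them into one class. The number of representatives therefore drops by at most $k_i-1\le b-1$. More precisely, with $m$ classes of $e$ before the query, the new count is $m-k_i+1$. So each batch is a step $m\mapsto m-(k-1)$ with $k\le b$. The claim becomes a purely arithmetic statement: starting from $x=|e|$, reaching exactly $r^{rec}_b(x)$ needs at least $q^{rec}_b(x)$ steps, with equality only if every step has $k=b$. Here "consisting exclusively of representatives of $e$" means all $b$ slots are used by distinct classes of $e$.

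The unfolding of Definition~\ref{def:recursive_modulo} gives the key identity. When $x=nb+x'$, $n$ full reductions of size $b-1$ take $x$ to $x-n(b-1)=n+x'$. By induction along the recursion, $x-r^{rec}_b(x)=(b-1)\,q^{rec}_b(x)$. This is proved by induction on $x$: the base case $x<b$ is trivial, and the step uses $x-(n+x')=n(b-1)$ together with the inductive hypothesis applied to $n+x'<x$ (valid since $n\ge1$ and $b\ge2$).

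The lower bound then follows. Let $T$ batches reduce $e$ from $|e|$ to $r^{rec}_b(|e|)$ representatives. The total decrease is $\sum_i (k_i-1)\le T(b-1)$, where batches not touching $e$ contribute $0$. It also equals $|e|-r^{rec}_b(|e|)=(b-1)q^{rec}_b(|e|)$. Hence $T\ge q^{rec}_b(|e|)$. If $T=q^{rec}_b(|e|)$, every inequality is tight, so $k_i=b$ for all $i$. That means each batch is filled with $b$ records of $e$ lying in pairwise distinct current classes, i.e. representatives of $e$, which is the second claim.

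The main obstacle is interpretive rather than technical. One must check that reaching "$\ell$ representatives" in Definition~\ref{def:reduction} is exactly "$\sim_Q$ has $\ell$ classes on $e$". This holds because $\sim_Q$ only ever merges classes of genuinely matching records, thanks to the consistent oracle and transitive closure. One must also check that a batch with several records from the same class, or with foreign records, strictly loses reduction power. Both follow from the merge accounting above: class count drops by (number of distinct $e$-classes hit) $-1$. I would state this accounting as a small preliminary claim before the inequality.
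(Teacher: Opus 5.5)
Your proof is correct, but it takes a different route from the paper's.

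\textbf{The paper's route.} The paper inducts on $|e|$, adding one record $r_N$ at a time. If $r^{rec}_b(N-1)<b-1$, the new record simply becomes an extra representative and $q^{rec}_b$ is unchanged. If $r^{rec}_b(N-1)=b-1$, one more batch containing $r_N$ and the $b-1$ existing representatives collapses $e$ to a single representative, and $q^{rec}_b$ grows by one.

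That induction is essentially constructive. It shows that $q^{rec}_b(|e|)$ batches made only of representatives of $e$ \emph{suffice}, which is the form in which the lemma is later used (upper bound of Theorem~\ref{th:query_min}, and Theorem~\ref{th:optimal}). However, the ``at least'' direction and the equality characterization are asserted (``is required'') rather than derived. The increment rules for $r^{rec}_b$ and $q^{rec}_b$ are also used without proof.

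\textbf{Your route.} You use a global invariant: the number of $\sim_Q$-classes of $e$ drops by exactly the number of distinct classes the batch hits, minus one, hence by at most $b-1$. You combine this with the identity $x-r^{rec}_b(x)=(b-1)\,q^{rec}_b(x)$, proved directly from the recursion. This yields the lower bound and the tightness statement in one stroke, namely that every batch must consist of $b$ records of $e$ from pairwise distinct classes. It is more rigorous than the paper's argument. It also covers the weaker reading ``at most $r^{rec}_b(|e|)$ representatives.''

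\textbf{What your route leaves implicit.} It does not by itself exhibit a schedule attaining the bound, but this is immediate in your framework. Repeatedly query $b$ distinct representatives while at least $b$ remain. The class count stays congruent to $|e|$ modulo $b-1$, and your identity shows $r^{rec}_b(|e|)$ is the unique element of $\{1,\dots,b-1\}$ with that residue. So the process stops after exactly $q^{rec}_b(|e|)$ batches, at $r^{rec}_b(|e|)$ representatives. That recovers the constructive content of the paper's proof as well.
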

\begin{proof}
We proceed by induction on $|e|$, the size of $e$. If $|e|< b$, then $q^{rec}_b(|e|) = 0$ and $r^{rec}_b(|e|) = |e|$, so the statement holds by taking each record in $e$ as its own representative.
Assume the statement is true for all  $|e|\leq N-1$, and let us prove it for $|e| = N$.

Let $[e]_\sim=\{r_1, r_2,\ldots, r_N\}$, and let $e'$ be a dummy entity satisfying $[e']_\sim = \{r_1, r_2,\ldots, r_{N-1}\}$. By induction, $e'$ can be reduced to $r^{rec}_b(|e'|)$ representatives using $q^{rec}_b(|e'|)$ $b$-batches. Now there are two cases: either $r^{rec}_b(|e'|) < b-1$, or $r^{rec}_b(|e'|) = b-1$. In the first case, it holds that $r^{rec}_b(|e|) = r^{rec}_b(|e'|) +1$. Thus, no additional $b$-batches are required: it suffices to keep the same representatives of $e'$ and consider $r_N$ as its own representative. Being $q^{rec}_b(|e|) = q^{rec}_b(|e'|)$, the claim is true in this case.

In the second case, we have $r^{rec}_b(|e|) = 1$, and a $b$-batch containing $r_N$ together with all representatives of $e'$ is required. After this batch, $e$ is reduced to a single representative. Since $q^{rec}_b(|e|) = q^{rec}_b(|e'|) + 1$, the claim holds.
\end{proof}

\begin{comment}
    \blue{TUTTA QUESTA PARTE BLU PUO' ESSERE ELIMINATA
Let us introduce a crucial example, showing that sometimes it is convenient creating batches with records from different entities. We point out that a similar variant of the bin packing problem, called \emph{bin packing with item fragmentation}, is presented in \cite{menakerman2001bin}, where a bin can be split and each split bin is augmented by one. To the best of our knowledge, our problem has not been studied.}

\begin{example}\label{ex:5_6_7}
\blue{Let $e_x, e_y, e_z$ three entities composed by 5,6 and 7 records, respectively. To find all matches in these entities, we need to reduce each one to one representative. Assume that $b = 10$. Then we cannot put all the representatives of any two entities into the same $b$-batch. We can use Problem~\ref{prob:bin} for finding a solution, and, in this case, we obtain 3 $b$-batches (bins) as solutions. However, we can obtain the reduction to one representative for each entity by using only 2 $b$-batches. For instance, we create the first $b$-batch with all the representatives of $e_x$ and 5 representatives of $e_y$. After this, $e_x$ is reduced to one representative and $e_y$ to 2. Thus we can insert these last two representatives of $e_y$ with all the 7 representatives of $e_z$ in the second $b$-batch.
    Another example. If $e_x, e_y, e_z$ had been reduced to $6,7,7$ representatives, respectively, then the minimum number of $b$-batches would have been solved by Problem~\ref{prob:bin}.}
\end{example}
\end{comment}

\begin{theorem}\label{th:query_min}
    Given a set of records $R$ and an integer $b$, let $\Phi_{b}$ be the minimum number of $b$-batches required to discover all matches in $R$. Then
    \begin{equation}\label{eq:phi_b}
    \Omega + \frac{1}{b}\sum_{e\in\mathcal{E}_1} r^{rec}_b(|e|)\leq \Phi_b\leq \Omega + BPP(\mathcal{E}_1)
    \end{equation}

    where $\Omega = \sum_{e\in\mathcal{E}} q^{rec}_b(|e|)$, $\mathcal{E}_1 =\{e\in\mathcal{E} \, | \, r^{rec}_b(|e|)>1\}$, and $ BPP(\mathcal{E}_1)$ is the solution of Problem~\ref{prob:bin} for $\mathcal{E}_1$, in which the capacity is $b$, the items are the entities in $\mathcal{E}_1$, and the size of each entity $e$ is $r^{rec}_b(|e|)$.
\end{theorem}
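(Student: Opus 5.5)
The upper bound is a construction and the lower bound is a counting argument that follows the ``reduce, then finish'' structure of Lemma~\ref{lemma:representatives}.

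\textbf{Upper bound.} For each entity $e\in\mathcal{E}$ we first apply the constructive part of Lemma~\ref{lemma:representatives}. This uses $q^{rec}_b(|e|)$ batches, each filled only with current representatives of $e$, and reduces $e$ to $r^{rec}_b(|e|)$ representatives. These reductions over all entities cost $\Omega$ batches in total.

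If $r^{rec}_b(|e|)\le 1$, all matches of $e$ are already discovered by transitivity. For the entities in $\mathcal{E}_1$, we take an optimal bin packing of the items $r^{rec}_b(|e|)$ with capacity $b$. Each bin becomes one $b$-batch containing all remaining representatives of the entities packed in it. After that batch, each such entity collapses to one class, so every match in $R$ is known. This gives $\Phi_b\le \Omega+BPP(\mathcal{E}_1)$.

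\textbf{Lower bound.} I would charge batch slots per entity. For any $Q$ that discovers all matches, let $s_e$ be the total number of slots (record occurrences, over all batches) occupied by records of $e$. The key claim is a per-entity inequality:
\[
s_e\ \ge\ b\,q^{rec}_b(|e|)+r^{rec}_b(|e|) \quad\text{for } e\in\mathcal{E}_1 .
\]
The natural proof goes through the number of classes of $e$ under $\sim_Q$. Suppose a batch contains $k$ current classes of $e$ among its records. Then it merges at most $k$ classes into one, so it reduces the number of classes of $e$ by at most $k-1$, while using at least $k$ slots.

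Write $k_1,\dots,k_m\le b$ for the numbers of classes of $e$ hit by the successive batches touching $e$. Discovering all matches of $e$ means going from $|e|$ classes down to one, which requires
\[
\sum_j (k_j-1)\ \ge\ |e|-1 .
\]
We then need to minimise $\sum_j k_j$ subject to this constraint and $k_j\le b$. Each batch gives a ``saving'' of at most $b-1$ classes per $b$ slots, and there is a fixed overhead of one slot per batch. So the minimum uses as many full batches as possible plus one final partial batch. The recursion defining $q^{rec}_b$ and $r^{rec}_b$ encodes exactly this greedy process: $x=nb+x'$ becomes $n+x'$ after $n$ full batches.

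The identity I would verify by induction on $x$, using Definition~\ref{def:recursive_modulo}, is
\[
(b-1)\,q^{rec}_b(x)+r^{rec}_b(x)=x .
\]
Given this identity, the optimum is $b\,q^{rec}_b(x)+r^{rec}_b(x)$ slots whenever $r^{rec}_b(x)>1$. For entities with $r^{rec}_b(|e|)\le 1$, only $s_e\ge b\,q^{rec}_b(|e|)$ is needed; the $r^{rec}_b(|e|)=1$ case still requires a check, and that is exactly why such entities are excluded from $\mathcal{E}_1$.

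Summing over entities, the total number of slots used is at most $b\,|Q|$. Therefore
\[
b\,\Phi_b\ \ge\ \sum_e s_e\ \ge\ b\,\Omega+\sum_{e\in\mathcal{E}_1}r^{rec}_b(|e|),
\]
and dividing by $b$ gives the lower bound.

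\textbf{Main obstacle.} The hard step is the per-entity minimisation of $\sum_j k_j$. Interleaving batches across entities does not matter, because slots are charged per entity. However, I must argue carefully that partial batches cannot beat the recursive greedy bound. I would first try an exchange argument: replace any two batches with $k_j,k_{j'}<b$ by one batch with $\min(b,\cdot)$ and a remainder batch, and show this never increases the number of slots. Then conclude using the identity above.
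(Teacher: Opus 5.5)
Your proof is correct. The upper bound is the paper's construction: the full batches from Lemma~\ref{lemma:representatives}, then one batch per bin of an optimal packing.

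For the lower bound you take a genuinely different route. The paper argues in two phases. First, by Lemma~\ref{lemma:representatives}, at least $\Omega$ batches are needed to bring every entity down to $r^{rec}_b(|e|)$ representatives. Then the leftover $\sum_{e\in\mathcal{E}_1} r^{rec}_b(|e|)$ representatives need at least a $\frac{1}{b}$ fraction of that many further batches. This is short. However, it tacitly treats the reducing batches as dedicated to single entities and as preceding the joining batches. It also relies on the ``at least'' half of the lemma, whose inductive proof is only constructive.

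Your global slot count avoids these issues. It is indifferent to how batches mix records of different entities or interleave them, and it re-derives the lemma's lower bound along the way. Like the paper, it uses the fact that $\sim_Q$ is the transitive closure of observed matches.

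The step you flag as the main obstacle needs no exchange argument. Each batch touching $e$ merges exactly its $k_j$ classes, so $\sum_j (k_j-1)=|e|-1$. Hence $s_e\ge\sum_j k_j=|e|-1+m_e$, where $m_e$ is the number of batches touching $e$. Since $k_j\le b$, we get $m_e\ge\lceil (|e|-1)/(b-1)\rceil$.

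Your identity, together with $1\le r^{rec}_b(x)\le b-1$ for $x\ge 1$ (immediate from Definition~\ref{def:recursive_modulo}), gives
\[
|e|-1=(b-1)\,q^{rec}_b(|e|)+\bigl(r^{rec}_b(|e|)-1\bigr), \qquad 0\le r^{rec}_b(|e|)-1\le b-2 .
\]
So the ceiling equals $q^{rec}_b(|e|)+1$ when $r^{rec}_b(|e|)>1$, and $q^{rec}_b(|e|)$ when $r^{rec}_b(|e|)=1$. This yields $s_e\ge b\,q^{rec}_b(|e|)+r^{rec}_b(|e|)$ and $s_e\ge b\,q^{rec}_b(|e|)$ respectively. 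Those are exactly the two per-entity bounds your summation needs, including the $r^{rec}_b(|e|)=1$ check you left open.
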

\begin{proof}
By Lemma~\ref{lemma:representatives}, at least $\Omega$ $b$-batches are required to reduce every entity $e$ to $r^{rec}_b(|e|)$ representatives. The remaining representatives of the entities in $\mathcal{E}_1$ must still be joined, which requires at least $\frac{1}{b}\sum_{e\in\mathcal{E}_1}r^{rec}_b(|e|)$ additional batches. Conversely, using $BPP(\mathcal{E}_1)$ batches, the remaining representatives can be packed so that each entity is reduced to a single representative, proving the upper bound.
\end{proof}

%We observe that the lower bound in Equation~\ref{eq:phi_b} can be computed in $O(|R|)$ time, while the upper bound requires solving an instance of Problem~\ref{prob:bin}, which is NP-hard. Nevertheless, our experiments on real datasets (see Subsection~\ref{sub:???}) show that the lower bound differs from the upper bound by less than 3\%.

\subsection{Hardness of Batch Selection}\label{sub:PBER_NP-hard}

From now on, let $\mathcal{E} = (e_1, e_2,\ldots)$ denote the sequence of entities ordered by cardinality, i.e., $|e_1|\geq |e_2| \geq\ldots$.
Given a sequence of batches $Q = (q_1,q_2,\ldots,q_T)$ and a batch $q$, we recall that $Q\circ q$ denotes the sequence $(q_1,q_2,\ldots,q_T, q)$, i.e., $Q$ extended by $q$.
Moreover, let $\Delta_q^Q$ denote the number of match edges discovered by querying the \oracle with $q$, assuming the queries in $Q$ have already been asked, formally, $\Delta_q^Q = \match_{Q\circ q} - \match_Q$.
For a record $r$, let $|r|_Q$ denote the cardinality of $[r]_{\sim_Q}$, i.e., the number of records known to match $r$ after the queries in $Q$. It holds
\begin{equation}\label{eq:Delta_q}
    \Delta^Q_q  = \sum_{(r,r')\in \binom{q}{2}} \mathbf{1}_{\{r\sim r'\}}|r|_Q\cdot |r'|_Q.
\end{equation}

%Note that, if all $\binom{b}{2}$ pairs of records in $q$ are matching pairs, then $\Delta_q^Q = \sum_{(r,r')\in \binom{q}{2}} |r|_Q\cdot |r'|_Q$. %This observation leads to the following definition that is used in Lemma~\ref{lemma:gain} to determine the gain of match edges, given priority to larger entities.

\begin{definition}\label{def:gain}
    Let $R$ be a set of records, $Q$ a sequence of batches, and $b$ an integer. For two distinct elements $r,r'$ in $R/\sim_Q$ we define
\begin{equation}\label{eq:gain}
    \gain^b_Q(r,r') =
    \begin{cases}
        0, & \text{if } r\not\sim r',\\
        %0, & \text{if } r\sim_Q r',\\
        |r|_Q\cdot |r'|_Q \cdot (1+\frac{1}{i\cdot \binom{b}{2}\cdot|e_1|^2}), &\text{if $r,r'\in e_i$.}
    \end{cases}
\end{equation}
Moreover, let $\textit{Gain}^b_Q$ denote the complete graph on $R/\sim_Q$ in which the edges' weights are given by $\gain^b_Q$.
\end{definition}

The importance of the coefficient $(1+\frac{1}{i\cdot \binom{b}{2}\cdot|e_1|^2})$ is necessary to prioritize larger entities without affecting the number of new match edges discovered with a $b$-batch, as formalized in the following lemma.

\begin{lemma}\label{lemma:gain}
Let $R$ be a set of records, $Q$ a sequence of batches, and $b$ an integer. Let $q$ be a $b$-batch in $R/\sim_Q$, then
\begin{equation}\Delta^Q_q  \leq \sum_{(r,r')\in \binom{q}{2}} \gain^b_Q(r,r') < \Delta^Q_q  + 1
\end{equation}
\end{lemma}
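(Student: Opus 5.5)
The plan is to compare the two sums term by term, using Equation~\eqref{eq:Delta_q} for $\Delta^Q_q$. For each pair $(r,r')\in\binom{q}{2}$, Definition~\ref{def:gain} gives two cases.
\begin{itemize}
\item If $r\not\sim r'$, then $\gain^b_Q(r,r')=0$, and the indicator $\mathbf{1}_{\{r\sim r'\}}$ in Equation~\eqref{eq:Delta_q} also vanishes.
\item If $r\sim r'$, then both records belong to a common entity $e_i$, and
\[
\gain^b_Q(r,r') = |r|_Q|r'|_Q + \frac{|r|_Q|r'|_Q}{i\binom{b}{2}|e_1|^2}.
\]
\end{itemize}
Summing over all pairs therefore yields
\[
\sum_{(r,r')\in\binom{q}{2}}\gain^b_Q(r,r') = \Delta^Q_q + \sum_{(r,r')\in\binom{q}{2},\, r\sim r'} \frac{|r|_Q|r'|_Q}{i(r,r')\binom{b}{2}|e_1|^2},
\]
where $i(r,r')$ is the index of the common entity. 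The error term is nonnegative, so the left inequality is immediate.

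For the strict right inequality we bound the error term. First, $[r]_{\sim_Q}$ is contained in the entity of $r$, because $\sim_Q$ only records correct matches. Hence, when $r,r'\in e_i$, we get $|r|_Q|r'|_Q\le |e_i|^2\le |e_1|^2$, since the entities are ordered by non-increasing cardinality. Second, $i\ge 1$. Together these give that each summand is at most $1/\binom{b}{2}$. A $b$-batch has at most $\binom{b}{2}$ pairs, so the error term is at most $1$.

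The remaining work is making the bound \emph{strict}, and this is the step needing care. Two facts should do it. First, $r$ and $r'$ are distinct classes of $R/\sim_Q$ that both lie in $e_i$. Their classes are disjoint, so $|r|_Q+|r'|_Q\le |e_i|$, and thus $|r|_Q|r'|_Q\le |e_i|^2/4<|e_1|^2$. Each summand is therefore strictly less than $1/\binom{b}{2}$. Second, if no pair in $q$ matches, the error term is $0<1$ trivially. Otherwise at least one summand is strictly below $1/\binom{b}{2}$ and the rest are at most that, so the sum is strictly less than $1$.

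One technical point must be checked before this goes through: that $q$, being a batch in $R/\sim_Q$, consists of distinct representatives. This is exactly what the disjointness argument uses, and the case $r\sim_Q r'$ never arises inside $q$. With that verified, the two bounds combine to give $\Delta^Q_q \le \sum \gain^b_Q < \Delta^Q_q+1$.
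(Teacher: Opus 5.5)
Your proof is correct and follows essentially the same argument as the paper: $q$ has at most $\binom{b}{2}$ pairs, each contributes a product $|r|_Q|r'|_Q$ strictly below $|e_1|^2$, and $i\ge 1$. The paper reduces to showing $\Delta^Q_q<\binom{b}{2}|e_1|^2$ and bounds each pair by $|e_1|(|e_1|-1)/2$, whereas you bound each error summand directly via disjointness of the classes ($|r|_Q|r'|_Q\le |e_i|^2/4$); you also make explicit that $q$ consists of distinct classes, which the paper leaves implicit.
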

\begin{proof}
    By Equation~\ref{eq:Delta_q} it suffices to prove that $\Delta^Q_q  < \binom{b}{2}\cdot|e_1|^2$. We have exactly $\binom{b}{2}$ pairs in $q$, and each pair can discover at most $\frac{|e_1|(|e_1|-1)}{2} < |e_1|^2$ match edges. The thesis follows.
\end{proof}

To proceed, we first introduce a classical graph problem, whose NP-hardness was established in \cite{Feige2001}.

\begin{problem}[Heaviest Subgraph Problem (HSP)]\label{prob:k-heaviest}
    Given an edge-weighted graph $G$ and an integer $k$, find a subset $S$ of $k$ vertices of $G$ such that the sum of the edges' weight in $G[S]$ is maximized.
\end{problem}

Let us define $HSP(G,k)$ as a solution for Problem~\ref{prob:k-heaviest} for graph $G$ and integer $k$.  In the following proposition, we describe the solution of Problem~\ref{prob:PBER}.

\begin{proposition}\label{prop:max_gain_with_HSP}
    Let $R$ be a set of records, $Q$ a sequence of queries, and $b$ an integer. Let $q$ be a $b$-batch in $R/\sim_Q$ that is a solution of HSP for $\textit{Gain}^b_Q$. Then $\Delta^Q_q \geq \Delta^Q_{q'}$ for every $q'$ $b$-batch in $R/\sim_Q$.
\end{proposition}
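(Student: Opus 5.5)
The plan is to sandwich $\Delta^Q_q$ and $\Delta^Q_{q'}$ with Lemma~\ref{lemma:gain} and then use integrality. For any $b$-batch $p$ in $R/\sim_Q$, write
\[
W(p)=\sum_{(r,r')\in\binom{p}{2}}\gain^b_Q(r,r')
\]
for its weight in $\textit{Gain}^b_Q$. Lemma~\ref{lemma:gain} then gives $\Delta^Q_p\le W(p)<\Delta^Q_p+1$ for every such $p$.

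First, I settle the reduction to exactly $b$ vertices. HSP asks for exactly $k=b$ vertices, whereas a $b$-batch may contain fewer. Since all weights are nonnegative, adding vertices never decreases $W$. So any $q'$ with fewer than $b$ elements can be padded to a set $\hat q'$ of exactly $b$ elements (assuming $|R/\sim_Q|\ge b$) with $W(\hat q')\ge W(q')$. Likewise $\Delta^Q_{\hat q'}\ge \Delta^Q_{q'}$, because by Equation~\eqref{eq:Delta_q} every term is nonnegative. It therefore suffices to compare $q$ against batches of size exactly $b$. If $|R/\sim_Q|<b$, I take the whole quotient set as the batch, and the statement is immediate.

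Second, the core chain. Let $q'$ be any $b$-batch of size $b$. Since $q$ solves HSP, $W(q)\ge W(q')$, and therefore
\[
\Delta^Q_{q'}\le W(q')\le W(q)<\Delta^Q_q+1 .
\]
Both $\Delta^Q_q$ and $\Delta^Q_{q'}$ are integers: each is a difference of match-edge counts, or equivalently a sum of products of integer class sizes by Equation~\eqref{eq:Delta_q}. From $\Delta^Q_{q'}<\Delta^Q_q+1$ I conclude $\Delta^Q_{q'}\le\Delta^Q_q$.

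The only real content is already in Lemma~\ref{lemma:gain}. The main point to check is that its upper bound is strict and uniform over all batches. This holds because the perturbation term is at most $\frac{1}{\binom{b}{2}|e_1|^2}\Delta$ with $\Delta<\binom{b}{2}|e_1|^2$.

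A secondary care point concerns pairs already merged under $\sim_Q$. Elements of $R/\sim_Q$ are distinct classes, so every pair in $q$ counts as new exactly when $r\sim r'$, consistent with Equation~\eqref{eq:Delta_q}. I will also note that Definition~\ref{def:gain} assigns weight $0$ to non-matching pairs, so $W$ and $\Delta$ range over the same set of pairs.
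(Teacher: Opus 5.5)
Your proof is correct and follows essentially the paper's route: both rest on the sandwich of Lemma~\ref{lemma:gain}, which the paper phrases as ``HSP solutions for $\textit{Gain}^b_Q$ coincide with HSP solutions for the auxiliary graph $H$ with weights $|r|_Q|r'|_Q\mathbf{1}_{\{r\sim r'\}}$.'' Your explicit integrality step and your padding to exactly $b$ vertices are what make that appeal rigorous, and you rightly use only the direction ``HSP-optimal for $\textit{Gain}^b_Q$ $\Rightarrow$ maximal $\Delta^Q_q$''; the converse implied by the paper's ``if and only if'' fails in general, because the perturbation term deliberately breaks ties between batches with equal $\Delta^Q_q$.
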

\begin{proof}
    Let $H$ be the graph in $R/\sim_Q$ in which the weight of the edge between records $r, r'$ is $|r|_Q\cdot|r'|_Q$, for every $r,r'\in  R/\sim_Q$ satisfying $r\sim r'$, and 0 otherwise. By Equation~\ref{eq:Delta_q}, the maximum $\Delta^Q_{q}$ is obtained by a solution $q$ of HSP on $H$. By Lemma~\ref{lemma:gain}, $q$ is a solution of HSP for $\textit{Gain}^b_Q$ if and only if $q$ is a solution of HSP for $H$. The thesis follows.
\end{proof}

In practice $\sim$ is unknown, so any algorithm must select batches by maximizing an \emph{estimate} of the gain (such as the benefit introduced in Section~\ref{sec:perbacco}), which can form an arbitrary edge-weighted graph. The following theorem shows that this selection problem is NP-hard.

\begin{theorem}\label{th:progressive_is_NP_hard}
    Let $R$ be a set of records, $Q$ a sequence of queries, and $b$ an integer. Given a weight function $\hat{g}$ on the pairs of $R/\sim_Q$ estimating $\gain^b_Q$, finding a $b$-batch $q$ that maximizes $\sum_{(r,r')\in\binom{q}{2}}\hat{g}(r,r')$ is NP-hard.
\end{theorem}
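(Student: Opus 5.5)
We prove Theorem~\ref{th:progressive_is_NP_hard} by a polynomial-time reduction from the Heaviest Subgraph Problem (Problem~\ref{prob:k-heaviest}), whose NP-hardness is established in~\cite{Feige2001}. The key observation is that the estimate $\hat{g}$ is an arbitrary weight function on the pairs of $R/\sim_Q$. Consequently, the batch-selection problem in the statement is essentially HSP with $k=b$, restricted to the representatives in $R/\sim_Q$. It is therefore enough to embed an arbitrary HSP instance into this setting.

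\textbf{Construction.} Take an instance $(G,k)$ of HSP, with $G=(V,E,w)$. We may assume the weights are nonnegative, since the hardness result of~\cite{Feige2001} holds already for unweighted graphs, i.e., weights in $\{0,1\}$. Build a record set $R$ with one record $r_v$ for each vertex $v\in V$, and set $Q=\emptyset$. Then $R/\sim_Q=R$ and every class is a singleton. Set $b=k$, and define $\hat{g}(r_u,r_v)=w(u,v)$ if $uv\in E$ and $\hat{g}(r_u,r_v)=0$ otherwise. The match relation $\sim$ can be chosen freely, for instance all records distinct, because the statement only asks us to maximize $\hat{g}$, and $\hat{g}$ is allowed to be any estimate of $\gain^b_Q$. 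The construction clearly takes polynomial time.

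\textbf{Handling ``at most $b$'' records.} A $b$-batch has \emph{at most} $b$ records, whereas HSP asks for exactly $k$ vertices. Since the weights are nonnegative, any batch with fewer than $b$ records can be padded with arbitrary additional records without decreasing $\sum_{(r,r')\in\binom{q}{2}}\hat g(r,r')$. Hence a maximum-weight $b$-batch can be converted in linear time into one of size exactly $b$ with the same weight, assuming $|V|\ge k$. This size-$b$ batch corresponds to a set $S$ of $k$ vertices whose induced weight in $G[S]$ equals the batch's value. Conversely, every $k$-subset of $V$ is a feasible $b$-batch. Thus an optimal batch yields an optimal HSP solution, and NP-hardness transfers.

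\textbf{Main obstacle.} The delicate point is whether one insists that $\hat{g}$ be a ``reasonable'' estimate of $\gain^b_Q$, rather than an arbitrary function. If some consistency with $\sim$ is required, I would first try to choose $\sim$ so that the edges of $G$ are exactly the match pairs among singletons. In that case $\gain^b_Q$ itself equals $1+\frac{1}{i\binom{b}{2}|e_1|^2}$ on edges, and by Lemma~\ref{lemma:gain} its maximization coincides with maximizing the number of induced edges. This, however, forces transitivity, so $G$ would have to be a disjoint union of cliques, where HSP is easy. That is why the argument must rely on $\hat g$ being an arbitrary estimate, as the paper's remark preceding the theorem indicates. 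The proof therefore stays with the arbitrary-weight reduction above.
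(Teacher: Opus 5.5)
Your proposal is correct and is essentially the paper's proof: the same reduction from HSP with $R=V(G)$, $Q=\emptyset$, $b=k$ and $\hat g=w$, which relies on $\hat g$ being an arbitrary estimate. Your padding argument for batches with fewer than $b$ records, which uses the nonnegativity of the weights, is a small but useful addition, since the paper simply asserts that the $b$-batches coincide with the $k$-subsets of $V(G)$.
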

\begin{proof}
    We reduce from Problem~\ref{prob:k-heaviest}. Given an instance $(G,k)$ of HSP with non-negative weights, we construct the instance in which $R = V(G)$, $Q=\emptyset$, $b=k$, and $\hat{g} = w$. Since $Q = \emptyset$, every record is its own representative; thus the $b$-batches of $R/\sim_Q$ coincide with the $k$-subsets of $V(G)$, and a $b$-batch maximizing the total estimated gain is exactly a solution of $HSP(G,k)$. The thesis follows from the NP-hardness of Problem~\ref{prob:k-heaviest}~\cite{Feige2001}.
\end{proof}

%\subsection{Optimal Solution in Some Special Cases}\label{sub:optimal_solution}
\subsection{Sufficient Conditions for the Existence of Optimal Solution}\label{sub:optimal_solution}

%In this subsection, we prove that in some special cases, the optimal solution exists and we can describe it. Ahe approach described is a generalization of the one described in~\cite{firmani2016online} for the pair case.

In this subsection, we show that, under certain conditions, an optimal solution exists and describe it. The proposed approach generalizes the method introduced in~\cite{firmani2016online} for the pairwise case.

\begin{theorem}\label{th:optimal}
    Let $R$ be a set of records, and $b$ an integer. Let $Q=(q_1,q_2,\ldots)$ be a succession of $b$-batches defined as
    \begin{equation}
        q_t =
        \begin{cases}
            HSP(\textit{Gain}^b_{\emptyset},b) & \text{if $t=1$,}\\
            HSP(\textit{Gain}^b_{Q_{t-1}},b) & \text{if $t> 1$}.
        \end{cases}
    \end{equation}

If $r^{rec}_b(|e|) = 1$ for each entity $e$ in $R$, then $Q$ is $b$-optimal for $R$.
\end{theorem}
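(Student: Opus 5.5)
We plan to show that, under $r^{rec}_b(|e|)=1$ for every $e$, the greedy sequence defined in Theorem~\ref{th:optimal} satisfies $\match_{Q_t}\ge \match_{Q'_t}$ for every sequence $Q'$ of $b$-batches and every $t$, and that it eventually discovers all match edges. By Proposition~\ref{prop:max_gain_with_HSP}, each $q_t$ maximizes $\Delta^{Q_{t-1}}_q$. The tie-breaking coefficient in Definition~\ref{def:gain} makes $q_t$, among the maximizers, prefer entities of smaller index $i$, i.e.\ larger entities. The strategy is to describe explicitly what greedy does and compare it with an upper bound valid for every $Q'$.

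\textbf{Step 1 (structure of greedy).} We first show by induction on $t$ that every greedy batch consists of $b$ current representatives of a single entity. The entity chosen is the one whose current class sizes give the largest value of $\sum_{\text{pairs}}|r|_Q|r'|_Q$. Mixing two entities is never better. For fixed total mass, the quantity $\sum_{\text{pairs}}|r||r'|=\frac{1}{2}\big((\sum|r|)^2-\sum|r|^2\big)$ is maximized by concentrating all $b$ slots on one entity, picking its largest classes. Moreover, the hypothesis $r^{rec}_b(|e|)=1$ ensures that an entity is never left with a leftover of $2\le \ell<b$ representatives that would force a partial batch. Greedy therefore processes entities one at a time, and each entity $e$ is reduced, through full batches only, to one representative in exactly $q^{rec}_b(|e|)$ queries, which is the minimum by Lemma~\ref{lemma:representatives}. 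Summing over entities gives completeness with $\Omega$ queries, matching the lower bound of Theorem~\ref{th:query_min}, since $\mathcal{E}_1=\emptyset$.

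\textbf{Step 2 (exchange / upper bound).} For an arbitrary $Q'$ we want to bound $\match_{Q'_t}$. We view any sequence as merging classes, where a batch merges up to $b$ classes of one entity and contributes $\sum|r||r'|$. The plan is an exchange argument, generalizing the pairwise argument of~\cite{firmani2016online}. We take an optimal-at-time-$t$ sequence $Q'$ and transform it without decreasing $\match_{Q'_t}$ in three moves:
\begin{enumerate}
\item split mixed batches into single-entity parts and reassign the freed slots;
\item reorder so that merges within an entity form a balanced $b$-ary tree (merging the largest classes first);
\item move effort toward larger entities, since a merge in a larger entity yields at least as many edges at the same stage.
\end{enumerate}
After these moves the sequence coincides with greedy's first $t$ batches. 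A cleaner alternative we would try first is a concavity-style potential. We let $f_e(k)$ be the maximum number of edges in entity $e$ discoverable with $k$ batches. We then prove that the per-batch increments $f_e(k+1)-f_e(k)$ are nondecreasing in $k$ until completion (greedy merges ever larger classes), and that greedy realizes the best allocation $\max\sum_e f_e(k_e)$ subject to $\sum_e k_e=t$.

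\textbf{Main obstacle.} The delicate step is Step 2. Because the increments of $f_e$ grow within an entity, the best allocation of $t$ batches is not obviously to finish entities one by one in order of size. One must show that committing to the largest entity dominates at every $t$, not only in total. Theorem~\ref{th:b_optimal_solution_not_exist} shows exactly this failing when leftovers exist, so the condition $r^{rec}_b(|e|)=1$ must be used here. Our approach is to show that with no leftovers, the increment sequence of a larger entity dominates that of a smaller one term by term, $f_{e}(k+1)-f_{e}(k)\ge f_{e'}(k+1)-f_{e'}(k)$ for $|e|\ge|e'|$. From this, an interchange argument yields greedy optimality at each prefix.
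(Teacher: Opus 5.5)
Your outline matches the paper's: greedy's structure from Proposition~\ref{prop:max_gain_with_HSP}, the tie-breaking coefficient, Lemma~\ref{lemma:representatives}, then an ordering argument borrowed from the pairwise case. The paper only asserts that last step by analogy, though. Your Step~2 is still a plan, and its ingredients do not work as stated.

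\textbf{The gap.} The real content of the theorem is an upper bound on $\match_{Q'_t}$ for an \emph{arbitrary} $Q'$. Your allocation bound $\max\{\sum_e f_e(k_e):\sum_e k_e=t\}$ only covers sequences in which every batch serves a single entity, and it is false in general. In the instance of Theorem~\ref{th:b_optimal_solution_not_exist}, the two mixed batches of $Q'$ reach $9$ matches, while two single-entity batches give at most $6$. So showing that mixed batches cannot help under the hypothesis \emph{is} the theorem. Your moves do not establish it:
\begin{itemize}
\item Move~1 does not do it, because splitting a batch changes $t$.
\item Move~2 is wrong as stated: a balanced $b$-ary merge tree loses matches at intermediate prefixes. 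For $b=3$ and $|e|=7$ it gives $3,6,21$. Always merging the big class with $b-1$ singletons, as your parenthetical suggests, gives $3,10,21$.
\item In Step~1, ``mixing is never better'' is not a general fact. For $b=4$ and two entities each split as $\{10,10,1,1\}$, the batch of the four $10$-classes gains $200$, while the best single-entity batch gains $141$. The claim holds only in the states greedy actually visits. State that invariant explicitly: $e_1,\dots,e_{j-1}$ resolved; $e_j$ is one class of size $1+k(b-1)$ plus singletons; all other entities untouched.
\end{itemize}

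\textbf{The fix.} Count class \emph{reductions} instead of batches. For any $Q'$, let $d_e$ be $|e|$ minus the number of $\sim_{Q'_t}$-classes inside $e$. A batch meeting $m\ge1$ entities raises $\sum_e d_e$ by at most $b-m\le b-1$, so $\sum_e d_e\le t(b-1)$. By convexity of $x\mapsto\binom{x}{2}$, the matches inside $e$ are at most $\binom{d_e+1}{2}$. Hence
\[
\match_{Q'_t}\le\max\Bigl\{\textstyle\sum_e\binom{d_e+1}{2}\;:\;\sum_e d_e\le t(b-1),\ 0\le d_e\le|e|-1\Bigr\}.
\]
Now take $|e_i|\ge|e_j|$ and set $d_i'=\min(|e_i|-1,d_i+d_j)$. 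Replacing $(d_i,d_j)$ by $(d_i',\,d_i+d_j-d_i')$ keeps the sum and does not lower the maximum, so by convexity it never decreases the objective. Therefore the optimum fills entities in non-increasing order of size.

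The transfer must be made in bulk. Moving a single unit from a far-along small entity to a fresh large one loses value, which is exactly what a naive term-by-term interchange would do.

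\textbf{Where the hypothesis enters.} The condition $r^{rec}_b(|e|)=1$, i.e.\ $(b-1)\mid(|e|-1)$, is used only for attainability. Every greedy batch yields exactly $b-1$ reductions inside the current entity, and none is wasted at an entity boundary. So greedy's $d$-vector after $t$ steps is exactly that filling, with one non-trivial class per entity, and greedy meets the bound for every $t$. The hypothesis is not what gives you domination of increments across entity sizes; that domination holds for all sizes.
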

\begin{proof}
By Lemma~\ref{lemma:representatives}, each entity $e$ with $r^{rec}_b(|e|)=1$ can be reduced to one representative using $q^{rec}_b(|e|)$ batches containing only representatives of $e$. Thus, as in the pairwise case~\cite{firmani2016online}, the optimal order is to reduce entities in non-increasing size because match edges grow quadratically with entity size. The coefficient in Equation~\ref{eq:gain} enforces this order without changing the number of newly discovered match edges, and Proposition~\ref{prop:max_gain_with_HSP} ensures that each selected batch maximizes the immediate gain among schedules with this order.
\end{proof}

We observe that when $b=2$, $r^{rec}_b(|e|) = 1$ for every entity $e$, so an optimal solution for pairwise batches always exists by Theorem~\ref{th:optimal}. In addition, the optimal solution described in Theorem~\ref{th:optimal} coincides with the one described in~\cite{firmani2016online}.

\begin{figure*}[t]
  \centering
  \includegraphics[width=0.95\textwidth]{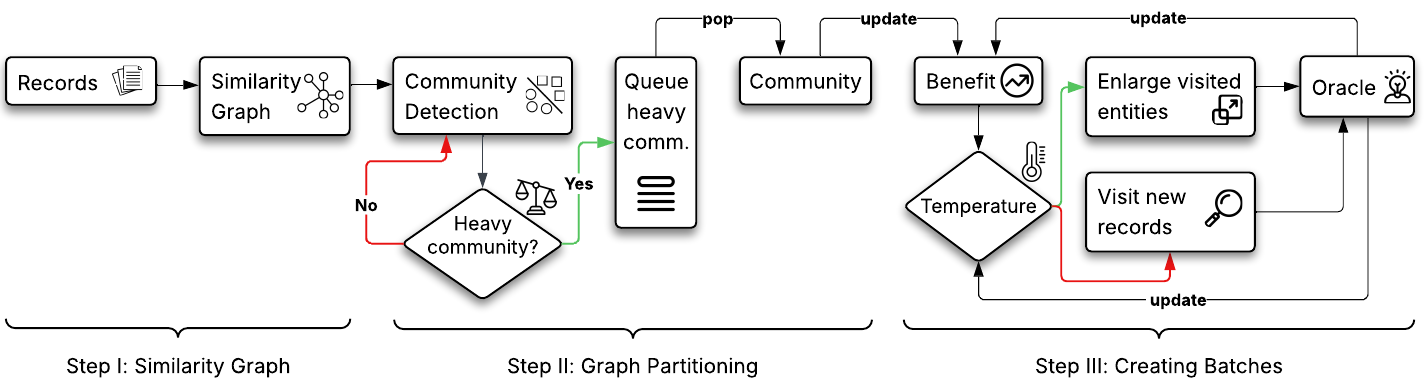}
  \caption{Schematic representation of the \texttt{pERbacco} algorithm, highlighting its core components.}
  \label{fig:schema_perbacco}
\end{figure*}

\section{Proposed approach: pERbacco}\label{sec:perbacco}

In this section, we present our approach for \underline{p}rogressive \underline{E}ntity \underline{R}esolution via \underline{ba}t\underline{c}hed queries and \underline{co}mmunity detection (called \emph{pERbacco}).
It consists of three steps: computing a similarity graph $G$ on $R$, partitioning $G$ by recursively calling a Community Detection Algorithm (CDA), and adaptively deciding whether to enlarge already visited entities or visit new records belonging to the same community. The first two steps are static, while the third depends on the answers obtained from the \oracle. A schematic overview of \perbacco is shown in Figure~\ref{fig:schema_perbacco}.

Let us briefly analyze these steps one by one. There are several methods in the literature to compute a similarity graph $G$; we refer to Subsection~\ref{sub:similarity_graph} for further details. We emphasize that our approach does not rely on any specific method.

In the second step, we partition the similarity graph $G$ into \emph{heavy communities}. All records that do not belong to any heavy community form the \emph{residual graph}. A heavy community is a subgraph of $G$ whose density is greater than a fixed threshold (see Subsection~\ref{sub:graph_partitioning}). Since the weights of match edges in $G$ are expected to be higher than those of non-match edges, a heavy community should mainly consist of match edges. We define the \emph{intra-recall} as the ratio between the number of match edges that belong to at least one heavy community and the number of match edges in $G$. The desiderata for heavy communities are:
\begin{itemize}
\item a high number of match edges within each heavy community,
\item a low number of match edges between different heavy communities,
\item an intra-recall close to the overall recall of $G$,
\item community sizes roughly comparable to $\window$.
\end{itemize}

By following the structure induced by these heavy communities, the discovery of match edges should be accelerated. The algorithm for this second step is explained in Subsection~\ref{sub:graph_partitioning}, and experiments are in Subsection~\ref{sub:communities_experiments}.

In the last step, we decide whether it is more convenient to try to enlarge already visited entities or visit new records belonging to the same community. The main idea is that visiting new records should discover fewer match edges (see Table~\ref{tab:batch_comparison} and Subsection~\ref{subsub:current_community_batches}). So, if by trying to enlarge already visited entities we discover a few edges, then in the next query we visit new records of the current community. The decisions are driven by an evolving parameter called \emph{temperature}, see Subsection~\ref{subsub:temperature}. This temperature is linked to the concept of  \emph{benefit}, where the benefit, introduced in~\cite{firmani2016online} and formally defined in Subsection~\ref{subsub:benefit}, is a measure of the expected gain in match edges -- and thus, gain in recall.  Note that this last step is repeated for each query.

\subsection{Similarity Graph}\label{sub:similarity_graph}

A similarity graph is a weighted graph where the vertices are records in $R$, and an edge weight is an uncalibrated \emph{similarity score}: higher weights indicate stronger evidence that two records match.

%There is a wide literature about methods for computing similiraty graphs... \blue{continuare il bla bla bla}

The construction of similarity graphs has been extensively studied in the context of entity resolution and data integration.
Classical approaches compute pairwise similarities by combining attribute-level similarity functions, such as string edit distances, token-based measures (e.g., Jaccard or TF–IDF), and numeric distance functions, often using manually defined rules or weighted combinations~\cite{DBLP:books/daglib/0030287, DBLP:journals/pvldb/PaulsenGD23}.
More advanced methods adopt learning-based models, where similarity functions or matching scores are learned from data \cite{li2020deep, peeters2023using, wang2023sudowoodo}.
%While these approaches can improve matching accuracy, they typically introduce additional computational complexity and, in some cases, require training data.

Independent of the specific similarity model, ER systems must address the quadratic complexity of pairwise comparisons.
For this reason, similarity computation is almost always coupled with blocking and indexing techniques, whose goal is to restrict similarity evaluation to promising candidate pairs.
A large body of work investigates blocking methods, which are commonly integrated into both traditional and learning-based ER pipelines \cite{DBLP:journals/tkde/Christen12, papadakis2016scaling, DBLP:journals/is/GagliardelliPSBP24, thirumuruganathan2021deep, javdani2019deepblock}.

In this work, we assume that a similarity graph $G$ on $R$ is given, and let $w : E(G) \to [0,1]$ be the edge-weight function; without loss of generality, we assume that the maximum weight is 1.
Our approach is independent of the specific technique used to compute similarities, and can operate with any method that assigns edge weights reflecting matching evidence.

%From now on, we assume that we have a similarity graph $G$ on $R$, and let $w : E(G) \to [0,1]$ the weight function on edges; without loss of generality, we assume that the maximum weight is 1.

\paragraph{The role of the similarity graph} The similarity graph serves as the starting point for both Problem~\ref{prob:batch_entity_resolution_BER} and Problem~\ref{prob:PBER}. Consequently, different similarity graphs yield different performance for our method and competing approaches. In our experiments, we use a uniform procedure to construct the similarity graph across all datasets, see~\cite{DBLP:journals/is/Mandilaras0GSTG21}. The resulting graphs show standard characteristics, with recall above 0.9 and precision on the order of 0.03 or lower. Regarding the time required to compute the similarity graph, it ranges from a few seconds for small datasets to a few minutes for the largest ones. This computation represents only a small fraction of the overall procedure.

Our method is robust to variations in the precision and recall of the similarity graph; as expected, higher-quality similarity graphs lead to improved performance.

\subsection{Graph Partitioning}\label{sub:graph_partitioning}

We need to balance the weight and the size of the communities. Indeed, splitting a community into two or more subcommunities may increase the average weight, but we might lose some edges that match. We find communities via a Community Detection Algorithm (\emph{CDA}), and we say that a community is \emph{indivisible for CDA} if the CDA cannot divide it into more subcommunities. Finally we define $\textit{min\_size} = \max(\window,10)$, and we do not consider \emph{small} communities, i.e., communities with less than $\textit{min\_size}$ records.

\begin{figure}[h]
  \centering
  \vspace{-5pt}
  \includegraphics[width=0.75\columnwidth]{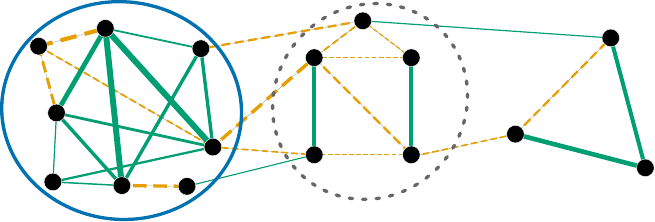}
  \caption{A similarity graph with match (solid green) and non-match (dotted orange) edges. Heavy and light communities are marked with solid blue and dotted gray, respectively.}
  \label{fig:community}
  \vspace{-10pt}
\end{figure}

In Figure~\ref{fig:community}, there is a  similarity graph $G$, where edge thickness denotes weights. A heavy community is highlighted in blue, and a light (i.e., not heavy) one in gray. Recursively, light communities are sent to the CDA to further split them until they become either heavy or indivisible. The records that do not belong to any heavy community form the residual graph. Note that in Figure~\ref{fig:community} the only one heavy community contains a relevant part of the match edges.

After the computation of the heavy communities, we order them by their weights (see Equation~\ref{eq:weight}). We denote this order by $\mathcal{C} = (c_1, c_2, \ldots, c_r)$. We will analyze the communities by following this order; the pseudocode is shown in Algorithm~\ref{alg:community}. In the whole paper and in all presented algorithms, ties are broken randomly.  An empirical analysis on the CDAs, the heaviness threshold $\lambda_w$, and how to obtain the desiderata described at the beginning of this section are in Subsection~\ref{sub:communities_experiments}.

%{\color{blue}We expect to have more match edges in the first communities, and few match edges intra-community. The last two concepts can be resumed as: if $i >> j$, then we expect that $\recall(c_i) > \recall(c_j)$, and if $i\neq j$, then $\recall(c_i\cup c_j) \approx  \recall(c_i) + \recall(c_j)$.}

We define the \emph{weight} of a community $c$ as
\begin{equation}\label{eq:weight}
     w(c) = \sum_{e\in c} w(e),
\end{equation}
The main idea is that the higher $w(c)$, the higher the number of expected match edges. Moreover we define the \emph{density} $\rho(c)$ of a community $c$ as the classical edge-weighted density:

\begin{equation}\label{eq:omega}
     \rho(c) = \frac{w(c)}{|c|(|c|-1)/2}.
\end{equation}

%l'ultima community NON può essere i restanti, sennò non funziona \perbacco

\begin{algorithm}[h]
\caption{Communities $(G, \lambda_w, CDA)$}\label{alg:community}
\KwIn{a graph $G$, a threshold $\lambda_w$, and a community detection algorithm $CDA$}
\KwOut{a queue of heavy communities of $G$}
{$\textit{light\_communities} = [V(G)]$\;
$\textit{heavy\_communities} = []$\;
%$atoms = []$\;
\While{$\textit{light\_communities}\neq []$}{
    $c = pop(\textit{light\_communities})$\;
    $\textit{sub\_communities} = CDA(c)$\;
    \For{each $c'\in \textit{sub\_communities}$}{
            \If{$\rho(c') \geq \lambda_{w}$}{
                add $c'$ to $\textit{heavy\_communities}$\;
            }
            \If{$\rho(c') < \lambda_{w}$ and $c'$ is not indivisible for CDA}{
                add $c'$ to $\textit{light\_communities}$\;
            }
        }
}
Discard from $\textit{heavy\_communities}$ all the communities having less than $\textit{min\_size}$ records\;
Sort the communities in $\textit{heavy\_communities}$ w.r.t. $w$\;
\Return{$\textit{heavy\_communities}$}
}
\end{algorithm}

\subsection{Creating Batches}

This subsection details the adaptive batch-selection step. We score candidate representative pairs by \emph{benefit}, use a greedy HSP routine to form high-benefit batches, and alternate between \emph{current-batches} that can enlarge known entities and \emph{community-batches} that introduce unqueried records. The temperature parameter controls this alternation.

\paragraph{From theory to algorithm.}
Proposition~\ref{prop:max_gain_with_HSP} shows that the ideal next batch is an HSP solution on the gain graph, but that graph depends on the unknown ground truth and HSP is NP-hard. The algorithm below is the observable counterpart of that characterization: it replaces the hidden gain graph with the benefit graph induced by similarity scores and previous oracle answers, uses \textit{GreedyHS} as the HSP surrogate, and restricts exploration through heavy communities to focus the search on dense regions of the similarity graph.

\subsubsection{Benefit}\label{subsub:benefit}

As we query the \oracle, the number of records that are representative of more than one record increases. So it may be helpful for the progressive recall to create batches composed by the representatives that are likely to match. We need a measure that expresses this.

Given a sequence of queries $Q$ and a pair of records $r$ and $r'$, we define the $\emph{benefit}_Q(r,r')$ as the potential gain in matches obtained by asking the \oracle whether $r$ and $r'$ match, as done in~\cite{firmani2016online}. We want $\benefit_Q(r,r')$ to be proportional to $|r|_Q\cdot|r'|_Q$, which corresponds to the number of match edges that would be discovered if $r$ matches $r'$. Moreover, $\benefit_Q(r,r')$ should increase with the evidence that $r$ and $r'$ match. We capture this evidence through an uncalibrated \emph{matching coefficient} $\mu_Q(r,r')$.

The formula for $\benefit_Q(r,r')$ follows:

\begin{equation}\label{eq:benefit}
    \benefit_Q(r,r') =
    \begin{cases}
        0, & \text{if } r\not\sim_Q r',\\
        \mu_Q(r,r') \cdot |r|_Q\cdot |r'|_Q, &\text{otherwise.}
    \end{cases}
\end{equation}

If $Q$ is clear from the context, then we omit it. Note that the term $|r|_Q\cdot|r'|_Q$ in Equation~\ref{eq:benefit} corresponds to the number of match edges discovered if $r$ matches $r'$. %So the benefit balances the crossing benefit with the cardinality of the two entities represented by $r$ and $r'$.

\paragraph{Matching coefficient}\label{subsub:matching_coefficient}

%As just stated, in order to define the benefit between two records $r$ and $r'$, we introduced the \emph{matching coefficient} $\mu_Q(r,r')$ between $r$ and $r'$.
Given two records $r$ and $r'$, there is no unique way to define $\mu_Q(r,r')$, but higher values should indicate stronger matching evidence. We therefore use $\mu_Q(r,r')$ only as an uncalibrated likelihood proxy.

We define $\textit{Cross}_Q(r,r') =\{(u,v)\in E(G) \ | \ u\in [r]_{\sim_Q}$ and $v\in [r']_{\sim_Q}\}$. Basically, $\textit{Cross}_Q(r,r')$ is composed by all edges in $G$ having one endpoint in $[r]_{\sim_Q}$ and the other in $[r']_{\sim_Q}$, i.e., the edges that \emph{cross} $[r]_{\sim_Q}$ and $[r']_{\sim_Q}$. In~\cite{firmani2016online} (see Equation (2) with our notation), $\mu_Q(r,r')$ is defined as the maximum edge weight in $\textit{Cross}_Q(r,r')$, that is
\begin{equation}\label{eq:crossing_DD}
     \mu_Q^{\max}(r,r') = \max_{e\in \textit{Cross}_Q(r,r')}w(e).
\end{equation}
We believe that the maximum is not representative enough of the aggregate matching evidence between $r$ and $r'$.
So we introduce a \emph{mean} version:
\begin{equation}\label{eq:crossing_density}
     \mu_Q^{\text{mean}}(r,r') = \frac{\sum_{e\in \textit{Cross}_Q(r,r')}w(e)}{|r|_Q\cdot|r'|_Q}.
\end{equation}

%After a sequence of queries $Q$, we denote by $[r]_Q$ (or, simply, $[r]$) the set of records that matches $r$ after the queries in $Q$, and by $|r|$ the cardinality of $[r]$. Given two representative records $r$ and $r'$, . We also define the \emph{crossing density between $r$ and $r'$} as

Let us briefly explain Equation~\ref{eq:crossing_density}. Note that $\mu_Q^{\text{mean}}(r,r')\in[0,1]$, since there are at most $|r|_Q\cdot|r'|_Q$ edges in $\textit{Cross}_Q(r,r')$. A value close to 1 indicates that $\textit{Cross}_Q(r,r')$ contains many edges with high weights. Conversely, a value close to 0 means that either there are few edges, their weights are low, or both. In Section~\ref{sec:experiments} we show that the presented algorithms perform better using $\mu_Q^{\text{mean}}$ than $\mu_Q^{\text{max}}$ in Equation~\ref{eq:benefit}. Note that using $\mu_Q^{\text{mean}}$ implies that $\benefit_Q(r,r') = \sum_{e\in \textit{Cross}_Q(r,r')}w(e)$.

\subsubsection{A greedy algorithm for the Heaviest Subgraph Problem}\label{subsub:greedy}

As we will explain in Subsection~\ref{subsub:current_community_batches}, at each step of our algorithm we need to select a subset of $\window$ records that maximizes the sum of benefit edges. This corresponds to solving an instance of Problem~\ref{prob:k-heaviest}, which, as previously stated, is NP-hard. Therefore, we use a greedy approach for Problem~\ref{prob:k-heaviest}.
If $b=2$, we simply select the two vertices of the edge with the maximum weight in $G$, which is the optimal solution. Otherwise, we start with $S=\{v\}$, where $v$ is the vertex whose incident edges have the maximum total weight. Formally, we define $w_e(x) =\sum_{e\in E(G), x\in e}w(e)$, thus $v = \argmax_{x\in V(G)} w_e(x)$.
Then, until $|S| = b$, we iteratively either add the vertex that maximizes the sum of the edge weights, or add the two vertices of the edge with maximum weight in the graph $G\setminus G[S]$. The pseudocode is reported in Algorithm~\ref{alg:greedy}.

\begin{algorithm}[h]
\caption{\textit{GreedyHS}$(G, b)$}\label{alg:greedy}
\KwIn{an edge-weighted graph $G$ and an integer $b$}
\KwOut{a suboptimal solution to Problem~\ref{prob:k-heaviest}}
{
\If{$b = 2$}{\label{line:b=2}
    let $e = (u,v)$ be the edge with the maximum weight\;
    \Return{$\{u,v\}$}
}
$u = \argmax_{x\in V(G)}\ w_e(x)$\;
$v = \argmax_{y \textit{ neighbor of } u}\ w((u,y))$\;
$S = \{u,v\}$\;
\While{$|S|< b$ and $|S|<|V(G)|$}{
    let $(x,y)$ be the heaviest edge in $G\setminus G[S]$\;
    let $z$ be the vertex in $G$ that maximizes $W = w(G[S \cup \{z\}])$\;
    \uIf{$w(G[S]) + w((x,y)) > W$ and $|S|\leq b - 2 $}{
        add $x$ and $y$ to $S$\;
    }
    \Else{
        add $z$ to $S$\;
    }
}
\Return{$S$}
}
\end{algorithm}

\subsubsection{Current-batches and community-batches}\label{subsub:current_community_batches}

We now have all the necessary definitions to describe our approach in detail. As mentioned at the beginning of this section, at each step we decide whether it is more convenient to try to enlarge already visited entities or to explore new records within the same community. This decision is guided by introducing two subsets of the records: \emph{current} and \emph{unqueried}.

At each step, only a subset of the records, called \emph{current}, is considered. The subset \emph{current} follows the sequence of heavy communities $c_1, c_2,\ldots$: at the first query \emph{current} is equal to $c_1$; once all records in $c_1$ have been queried at least once to the \oracle, \emph{current} is equal to $c_1\cup c_2$; once all records in $c_2$ have been queried, then \emph{current} is equal to $c_1\cup c_2\cup c_3$; and so on.

The set \emph{unqueried} corresponds, at each step of the algorithm, to the subset of records in \emph{current} that have not yet been queried to the \oracle. We now define two specific \emph{benefit} graphs.

\begin{definition}\label{def:g_benefit}
    Given a sequence of queries $Q$ and a set of records $V$, we define the \emph{benefit graph on $V$}, $B_{V}(Q)$, as the graph on $V/\sim_Q$ in which edge weights represent the benefit among records. Moreover, given a temperature $t$, we define $B_{V}^t(Q)$ as the subgraph of $B_{V}(Q)$ induced by all edges whose weight is higher than $t$.
\end{definition}

In Definition~\ref{def:g_benefit}, if $Q$ is clear from the context, we omit it. At each step, we aim to maximize the sum of benefits by considering either all records in \emph{current} or only the subset \emph{unqueried}. Considering \emph{current} corresponds to try to enlarge already visited entities, which is done by $\textit{GreedyHS}(B^t_{\textit{current}},\window)$, where the parameter $t$ is fixed in the next subsection. Considering \emph{unqueried} corresponds to visiting new records within the same community, which is done by $\textit{GreedyHS}(B_{\textit{unqueried}},\window)$.

For convenience, we refer to a batch composed of records in \emph{current} as a \emph{current-batch}, and a batch composed of records in \emph{unqueried} as a \emph{community-batch}. In the next subsection, we explain how to set the temperature $t$, which determines whether it is more advantageous to proceed with a \emph{current-batch} or a \emph{community-batch}.

A summary of the differences between \emph{current-batches} and \emph{community}\emph{-batches} is reported in Table~\ref{tab:batch_comparison}.
In particular, \emph{current-batches} are not disjoint, meaning that the same record may appear in multiple current-batches; records may belong to different communities; the number of discovered match edges is unbounded, as it depends solely on the (a priori unknown) ground truth of the dataset; and records may have already been queried.
In contrast, \emph{community-batches} are disjoint; all records belong to the same community; there exists an upper bound on the number of possible discovered match edges; and all records are unqueried.

\begin{table}[h]
    \small
    \setlength{\tabcolsep}{3pt}
    \centering
    \renewcommand{\arraystretch}{1.5}
    \begin{tabular}{|c|c|c|c|c|}
        \hline
        Type & Disjoint & Communities & \# Matches  & Queried/Unqueried\\
        \hline
        Current & No & Any & Unlimited & Both\\
        \hline
        Community & Yes & Same & $\leq \frac{\window(\window-1)}{2}$ & Unqueried\\
        \hline
    \end{tabular}
    \caption{Comparison between current-batches and community-batches.}
    \label{tab:batch_comparison}
\end{table}

\subsubsection{Setting the temperature}\label{subsub:temperature}

We now describe how to set the parameter $t$, which depends on the evolution of the benefit values. However, the benefit's value depends on the input dataset we are examining.
For instance, if the input dataset has many entities with a huge number of duplicated records, then it is possible to have high benefit values; conversely, if each entity of the input dataset has at most a few duplicates, then it is impossible to have high benefit values. In a nutshell, there is no single value of $t$ that works for all input datasets.
Therefore, we introduce two criteria to determine when the value of $t$ is too high and when it is too low.

Since we consider only two kinds of batches -- current-batches and community-batches -- the only alternative to querying the \oracle with a current-batch is to query it with a community-batch.
Thus, the criterion to determine when the value of $t$ is too low is to check whether the match edges discovered through a current batch are greater than the average number of match edges discovered through previous community batches.
To formalize this criterion, we introduce the following notation. Let $M^I_{comm}$ be the number of match edges discovered after querying the \oracle with the first $I$ community-batches, and let $\tau^I_{comm} = M^I_{comm}/I$ be the average of match edges discovered. If no confusion arises, then we omit the index $I$.
Let $M$ be the number of match edges discovered after querying the \oracle with a current batch. The criterion is: if $M < \tau_{comm}$, then $t$ is too low; therefore we double it (see Algorithm~\ref{alg:main} line~\ref{line:if_then_double}).

The criterion for determining when the value of $t$ is too high is to check whether $B^t_{\textit{current}}$ contains at least $\window$ vertices.
If not, we decrease the value of $t$ by multiplying it by the factor $(1-\frac{1}{\window})$ (see Algorithm~\ref{alg:main} line~\ref{line:else_5_percent}).
This ensures that we query the \oracle with a current-batch approximately every $\window$ queries.
Through experiments, we observed that our algorithm's performance improves when the frequency of current-batches is inversely proportional to $\window$, which motivates the factor $(1-\frac{1}{\window})$.

%This ensures that we avoid stopping querying the \oracle with current-batches in the case the values of benefit are low. Note that if the value of $t$ becomes too low, then it is doubled by the previous criterion.

By having the previous two criteria for high and low values of $t$, the choice of the initial value of $t$ is not relevant.
By default, we choose $t = \window$ at the beginning of the algorithm. Our main algorithm \perbacco is reported in Algorithm~\ref{alg:main}, where $\query(S)$ means ``query the \oracle with the records in $S$''.

\begin{algorithm}[h]
\caption{\perbacco}\label{alg:main}
\KwIn{a set of records $R$ and an integer $\window$}
\KwOut{a sequence of $\window$-batches of $R$}
{
Compute a similarity graph $G$ on $R$\;
Compute set $\mathcal{C}$ of heavy communities in $G$ by Algorithm~\ref{alg:community}\;
$t= b$\;
$\textit{current} = \emptyset$\;

\For{$c \in\mathcal{C}$}{
    $\textit{current} = \textit{current} \cup c$\;
    $\textit{unqueried} = c$\;
    \While{$|unqueried| \geq \window$}{
        $\textit{community\_batch}= \textit{GreedyHS}(B_{\textit{unqueried}},\window)$\;
        $\query(\textit{community\_batch})$\;
        $\textit{unqueried} = \textit{unqueried} \setminus \textit{community\_batch}$\;
        \While{$|V(B^t_{\textit{current}})| \geq\window$}{
        $\textit{current\_batch} = \textit{GreedyHS}(B_{\textit{current}}^t,\window)$\;
        $\query(\textit{current\_batch})$, and let $M$ be the number of match edges discovered\;
        $unqueried = unqueried \setminus \textit{current\_batch}$\;
        \If{$M < \tau_{comm}$}{\label{line:if_then_double}
            $t= 2t$\;
                }
        }
        $t = \big(1-\frac{1}{\window}\big)\cdot t$\label{line:else_5_percent}\;
    }
}
$\textit{current} = V(G)$\;
\While{there are non-inferable edges}{
    $\textit{current\_batch} = \textit{GreedyHS}(B_{\textit{current}}^t,\window)$\;
    $\query(\textit{current\_batch})$\;
}
}
\end{algorithm}

\section{Experimental Evaluation}\label{sec:experiments}

In this section, we discuss the results of our experimental evaluation. We compare all algorithms on publicly available datasets and synthetic ones. The datasets used in our experiments are described in Subsection~\ref{sub:datasets}.
To the best of our knowledge, there are no known algorithms for progressive entity resolution via adaptive batch query; so we extend the algorithms in~\cite{firmani2016online} from the pairwise query case to the batch query case, see Subsection~\ref{sub:competing_algorithms}.
We omit several non-adaptive competitor algorithms, such as the ones in~\cite{DBLP:journals/csur/ChristophidesEP21,crowder,papadakis_exploration},  since their performance is strictly lower than that of competing algorithms described in Subsection~\ref{sub:competing_algorithms}. Indeed, adapting the queries to the \oracle based on previous answers provides a significant advantage in terms of progressive recall.
All algorithms are implemented in Python, in a common framework\footnote{\url{https://github.com/Stravanni/pERbacco}}, where ties are broken randomly. We ran experiments on a machine with an AMD Ryzen 9 5950X CPU, 64 GB RAM, and an NVIDIA RTX 3090 GPU under Ubuntu 22.04.

In our experimental evaluation, the similarity graph is built using \emph{standard-blocking} and \emph{meta-blocking} in the JedAI library~\cite{DBLP:journals/is/Mandilaras0GSTG21}.
For each dataset, we tune the meta-blocking parameters to achieve a recall above 0.9, ensuring that the majority of true match edges were preserved in the resulting graph.
This choice reflects a concrete instantiation of the similarity graph abstraction introduced in Subsection \ref{sub:similarity_graph} and does not affect the generality of the proposed approach.

We report progressive recall, i.e., the fraction of ground-truth match edges discovered after each query budget. Under our consistent-oracle assumption, every discovered positive match is correct, so the precision of discovered matches is 1 and the corresponding F1-score is $2\cdot recall/(1+recall)$, a monotone transformation of recall. If the oracle is noisy, precision and F1 become essential metrics; modeling that setting is orthogonal to the batch-scheduling problem studied here.

Regarding the running time, the blocking and CDAs require from a few seconds on small datasets to about ten minutes on the largest ones. The algorithm \perbacco and its competitors (see Section~\ref{sub:competing_algorithms}) require at most a few seconds per query when using the ground truth as the \oracle. When using an LLM (in our case, GPT-5 mini), the time required to compute Figure~\ref{fig:run-llm} on \cora dataset ranges from 2 seconds for $b=2$ to 20 seconds for $b=20$.

Consequently, since blocking and CDAs are executed only once, the main bottleneck is determined by querying the \oracle for each query. In addition to execution time, the monetary cost of using an LLM must also be considered. For this reason, it is important to study entity resolution in a progressive manner, showing that, at the same level of recall, our approach requires fewer calls to the \oracle.

\begin{comment}
{\color{red} commento sull'ordine di grandezza dei tempi.

- blocking from seconds to minutes

- cda idem sopra

- algo (ancora meno)

- quando usaiamo LLM (gpt-3 mini) "vedi Figura con due phi", ongni cihamata va da ~2sec (b=2) ad ~15sec (b=20), quindi per arrivare anche solo a 1phi ci vuole (conti a spanne), quindi 1 (2?) odg in piu'

- quindi frase donatella
}
\end{comment}

\subsection{Competing Algorithms}\label{sub:competing_algorithms}

Firmani et al.~\cite{firmani2016online} presented two algorithms for the progressive entity resolution with pairwise adaptive query. The two algorithms are called $s_{\text{edge}}$ and $s_{\text{hybrid}}$; we do not deeply explain these algorithms, and we refer to~\cite{firmani2016online} for further details.
In a nutshell, $s_{\text{edge}}$ queries the \oracle each time with the edge having the highest benefit, and $s_{\text{hybrid}}$ queries the \oracle with the edges adjacent to the vertex with the highest total benefit (i.e., the vertex whose sum of benefits of adjacent edge is maximized), until a match edge is discovered or a prefixed number of edges have been queried.
We merge both algorithms in Algorithm~\ref{alg:DD}: when $\window = 2$, we obtain exactly $s_{\text{edge}}$ because of the case $b=2$ in Line~\ref{line:b=2} of Algorithm~\ref{alg:greedy}; when $\window > 2$, we adapt $s_{\text{hybrid}}$ to the batch-queries case by using the $\textit{GreedyHS}$ algorithm.

In our experiments, we denote by \perbac (resp., \DD) the results of Algorithm~\ref{alg:DD} when the benefits are computed using $\mu^{\text{mean}}$ (resp., $\mu^{\max}$). Note that \perbac coincides with \perbacco when there are no heavy communities. Moreover, in~\cite{firmani2016online} (titled \emph{Online Entity Resolution Using an Oracle}), the benefits are computed with $\mu^{\max}$. These observations motivate the naming of the algorithms.

\begin{algorithm}[h]
\caption{batched ER without communities}\label{alg:DD}
\KwIn{a set of records $R$ and an integer $\window$}
\KwOut{a sequence of $\window$-batches of $R$}
{
Compute a similarity graph $G$ on $R$\;
%Consider all the records as visited\;
\While{there are non-inferable edges}{
    $\query(\textit{\textit{GreedyHS}}(B_{V(G)},\window))$\;}
}
\end{algorithm}

\subsection{Datasets and Query Complexity}\label{sub:datasets}

We evaluate \perbacco and competing algorithms on five real datasets, spanning domains such as product, business, and citation, and on synthetic datasets. The \cora dataset consists of 1,295 bibliographic records of machine-learning publications~\cite{firmani2016online}. \camera contains 29.8$k$ specifications of real-world cameras collected from 24 e-commerce websites, originally used in the ACM SIGMOD 2020 programming contest~\cite{crescenzi2021alaska}. \funding\footnote{https://raw.githubusercontent.com/qcri/data\_civilizer\_system/master/grecord\_
service/gr/data/address/address.csv} includes 16.3$k$ records on financing requests submitted to the NYC Council Discretionary Funding. The Web Data Commons Products dataset~\cite{peeters2023wdc}\footnote{https://webdatacommons.org/largescaleproductcorpus/wdc-products/index.html\#toc5} contains product offers extracted from 3,259 web shops in 2020 via schema.org annotations; we use the \wdc variant, comprising 3,841 offers that describe 1,000 unique real-world products. Finally, the \voters\footnote{https://hpi.de/naumann/projects/repeatability/datasets/ncvoters-dataset.html} dataset contains demographic information on 14.2$k$ registered voters from North Carolina, grouped by sex and race~\cite{voters_data}.
We selected these datasets because they exhibit different entity size distributions (see Table~\ref{tab:dataset_table} that summarizes key statistics for each dataset). Moreover, we did not consider one-to-one datasets, as they are not suitable for studying entity resolution in a progressive manner via batch queries.

The synthetic datasets are randomly generated according to a specific distribution of weights and edges. The distribution of entity sizes follows the entity size distribution observed in the \camera dataset, which exhibits power-law behavior; to avoid excessively large entities, we impose a maximum size of 300.
We generate a dataset, \synth, consisting of approximately 39k records and exactly 10k entities. The recall is fixed at 0.95, while we create three variants with precision values of 0.5, 0.2, and 0.05, respectively. The missing match edges and the non-match edges are generated randomly.

To model edge weights, we analyzed the similarity graphs of the real datasets (Subsection~\ref{sub:similarity_graph}): non-match weights consistently follow a power-law distribution, whereas match weight distributions vary substantially across datasets, so we generate the latter using a uniform distribution.

\setlength{\tabcolsep}{3pt}
{\small
\begin{table}[!h]
    \centering
    \begin{tabular}{|c|c|c|c|c|c|c|}
        \hline
        \multirow{2}{*}{Dataset} & \multirow{2}{*}{Records} & \multirow{2}{*}{Matches} & \multirow{2}{*}{Entities} & \multicolumn{3}{c|}{Entity Sizes} \\
        \cline{5-7}
        & & & & Mean & Med. & Max \\
        \hline
        \cora & 1295 & 17184 & 93 (112) & 13.7 & 7  & 64\\
        \camera & 29787  & 541411 & 2087 (9005) & 9.9  & 4  & 256   \\
        \funding & 16258  & 135122 & 2351 (3113) & 6.5  & 4  & 115   \\
        \wdc & 3841  & 8971 & 1000 (1000) & 3.5  & 2  & 11  \\
        \voters & 14183  & 9819 & 5657 (6692) & 2.3  & 2   & 8   \\
        %cddb & 9763  & 300  & 221 (9497) & 2.1  & 2   & 6  \\
        %census & 841  & 344 & 333 (504) &  2.0  & 2   & 4   \\
        %restaurant & 864  & 112 & 112 (749) & 2   & 2  & 2   \\
        \hdashline
        \synth & 39161  & 915128  & 3890 (10000) & 8.5 & 3  & 257   \\
        %synth\_5000 & 19499  & 458786  & 1905 (5000) & 8.6 & 3  & 256   \\
        %synth\_1000 & 4793  & 171835  & 386 (1000) & 8.5 & 3  & 256   \\
        %synth\_250 & 1171  & 33901  & 105 (250) & 9.7 & 3  & 178   \\

        \hline
    \end{tabular}
    \caption{Dataset statistics. The Entities column reports the number of entities with size at least 2, with the total number of entities shown in parentheses. The Entity Sizes column excludes entities of size 1.}
    \label{tab:dataset_table}
\end{table}
}

Table~\ref{tab:upper_lower_bound} reports the lower and upper bounds of $\Phi_{10}$, as defined in Equation~\ref{eq:phi_b}, denoted by $\phi_{10}$ and $\overline{\Phi}_{10}$, respectively.
The upper bound $\overline{\Phi}_{10}$ is computed using the Python library \texttt{binpacking}\footnote{\url{https://pypi.org/project/binpacking/}}, which successfully finds the exact solution for all instances of Problem~\ref{prob:bin} considered in our experiments.
We observe that the lower and upper bounds differ by at most 3\%, indicating that the lower bound provides a tight approximation of $\Phi_{10}$.

{\small
\begin{table}[h!]
\centering
\begin{tabular}{c c c c c c c }
\hline
Bound & \cora & \camera & \funding & \wdc &\voters & \synth\\
\hline
$\phi_{10}$          & 137  & 2436  & 1612 & 391 & 1315 & 3514 \\
$\overline\Phi_{10}$ & 137  & 2455  & 1640 & 396 & 1354 & 3544 \\
\hline
\hline
error & 0\% & 0.7\% & 1.7\% & 1.2\% & 2.9\% & 0.8\%\\
\hline
\end{tabular}
\caption{Upper and lower bound of $\Phi_{10}$ for all datasets.}
\label{tab:upper_lower_bound}
\end{table}
}

\subsection{A Priori Selection of CDAs}\label{sub:communities_experiments}

We tested four edge-weighted CDAs: Louvain~\cite{louvain}, Leiden~\cite{leiden}, Asynchronous Label Propagation~\cite{lpa}, and Infomap~\cite{infomap}; see~\cite{survey_cda} for a survey. The only parameter we expose in the main text is the heaviness threshold $\lambda_w$. Empirically, Louvain and Leiden most often yield the highest record ratio, and low $\lambda_w$ values help dense datasets because they preserve more candidate matches inside heavy communities. On sparse datasets, the same choice can add many non-match edges, so \perbac, which does not use communities, may be preferable.

Table~\ref{tab:general_trends} illustrates this behavior on one dense dataset (\funding) and one sparse dataset (\voters), using $\window=10$. The table is not used to tune per-dataset parameters in the main comparison: unless stated otherwise, we use Louvain with $\lambda_w=0.05$.

{\small
\begin{table}[h]
\centering
\begin{tabular}{l c c c c}
\toprule
\textbf{Dataset} & $\lambda_w$ & \textbf{CDA} & \textbf{record ratio} & \textbf{recall} \\

\midrule
\multirow{6}{*}{\funding}
  & \multirow{2}{*}{0.05}  & Louvain  & 0.98 & \textbf{0.884}\\
  &                        & Leiden  & 0.71  & 0.856   \\
\cmidrule(lr){2-5}
  & \multirow{2}{*}{0.15}  & Louvain  & 0.83 & 0.843 \\
  &                        & Leiden  & 0.28  & 0.797 \\

  \cmidrule(lr){2-5}
 &   \multirow{1}{*}{//}   & w/o CDA  & //   & 0.817 \\
\midrule

\multirow{6}{*}{\voters}
  & \multirow{2}{*}{0.05}  & Louvain & 0.80  & 0.232  \\
  &                        & Leiden  & 0.10  & \underline{0.280} \\
\cmidrule(lr){2-5}
  & \multirow{2}{*}{0.15}  & Louvain & 0.08  & \textbf{0.289}  \\
  &                        & Leiden  & 0.01  & \underline{0.284}\\
  \cmidrule(lr){2-5}
 &   \multirow{1}{*}{//}   & w/o CDA  & //   & \underline{0.284} \\
\bottomrule
\end{tabular}
\caption{Performance of \perbacco for $\window=10$, on different datasets, CDAs and values of $\lambda_w$. When no CDA is used, then \perbacco reduces to \perbac. Bold indicates the best combination of CDA and $\lambda$ for each dataset, and underlined results are close to the best. The recall is computed after $\phi_{10}$ queries.}
\label{tab:general_trends}
\end{table}
}

\begin{comment}
\subsection{Complexity analysis}
\begin{itemize}
    \item il costo di una singola domanda consiste in: calcolare i benefit (e questo costa $O(\window^2 * n)$, basta vedere il Lemma 4 di DD), ordinare i benefit e eliminarne alcuni (in teoria posso avere $O(\window^2 * n)$ aggiornamenti al database, che può essere di dimensione $O(n^2)$ nel caso peggiore, quindi tenerlo ordinato può costare parecchio), a lanciare il GreedyHS.
    \item in generale noi abbiamo molti meno spigoli di Donatella-Divesh per via dell'insieme current che cresce poco alla volta. Infatti il numero degli spigoli "esplode" quando arrivo all'ultimo while.
    \item inoltre l'analisi worst-case ha ancora meno senso perché è improbabile che tutti i vertici hanno  grado $n$. Però nel caso peggiore un vertice può avere caso $n$ anche se parto da grado limitato
    \item forse direi semplicemente che DD non aggiornano sempre tutti i benefit, ma solo le probabilità, e poi calcolano i benefit dei primi $n$ spigoli con probabilità maggiore. Questo noi NON possiamo farlo perché usando la media si creano nuove probabilità (con il massimo questo non accade), e quindi queste nuove probabilità devono comunque essere ordinate. Dire anche che negli esperimenti abbiamo implementato DD ordinando i benefit; questo dovrebbe migliorare i risultati rispetto a considerare i primi $n$ come detto nel loro paper.
\end{itemize}
\end{comment}

\subsection{Suboptimal Solution}\label{sub:suboptimal_solution}

As stated in Theorem~\ref{th:b_optimal_solution_not_exist}, a $b$-optimal solution does not exist in the general case, and selecting the batch that maximizes the estimated gain is NP-hard (Theorem~\ref{th:progressive_is_NP_hard}). Nevertheless, Theorem~\ref{th:optimal} characterizes a $b$-optimal solution for some special cases, and Proposition~\ref{prop:max_gain_with_HSP} shows that this solution maximizes the number of match edges at each query. Its implementation requires solving an instance of the NP-hard Heaviest Subgraph Problem for each query; we therefore call \emph{suboptimal} solution the algorithm of Theorem~\ref{th:optimal} where HSP is replaced by $\textit{GreedyHS}_{1000}$, where $\textit{GreedyHS}_{1000}(G,b)$ denotes $\textit{GreedyHS}(G_{1000},b)$, and $G_{1000}$ is the subgraph of $G$ induced by its 1000 heaviest edges.

If $\window = 2$, a 2-optimal solution exists, as discussed in Subsection~\ref{sub:optimal_solution}. For other values of $\window$, we do not have any theoretical guarantee about the performance of the suboptimal solution. Nevertheless, in all our experiments the suboptimal solution achieves a recall of at least 0.98  after $\phi_{\window}$ queries on every tested dataset.

\begin{figure*}[t!]
    \centering
    % --- Prima riga ---
    \begin{subfigure}{0.24\textwidth}
        \centering
        \includegraphics[width=\linewidth]{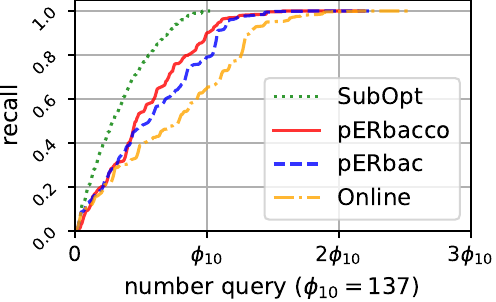}
        \caption{\cora}
    \end{subfigure}%
    \begin{subfigure}{0.24\textwidth}
        \centering
        \includegraphics[width=\linewidth]{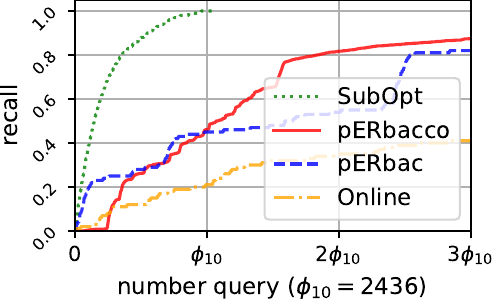}
        \caption{\camera}
    \end{subfigure}%
    \begin{subfigure}{0.24\textwidth}
        \centering
        \includegraphics[width=\linewidth]{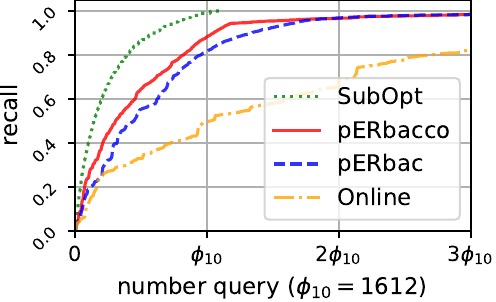}
        \caption{\funding}
    \end{subfigure}%
    \begin{subfigure}{0.24\textwidth}
        \centering
        \includegraphics[width=\linewidth]{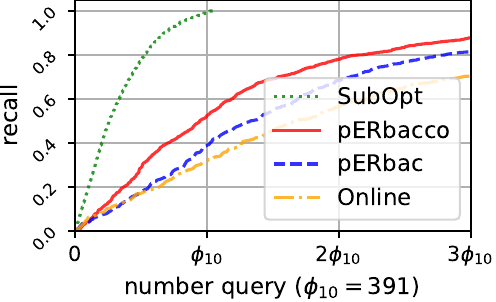}
        \caption{\wdc}
    \end{subfigure}

    \vspace{0.3cm} % spazio tra le righe

    % --- Seconda riga ---
    \begin{subfigure}{0.24\textwidth}
        \centering
        \includegraphics[width=\linewidth]{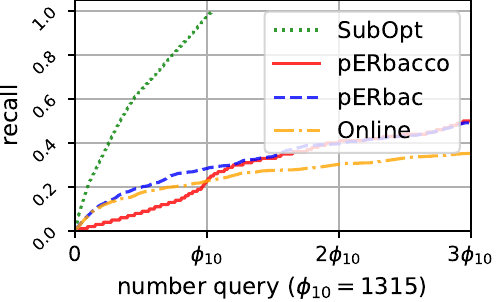}
        \caption{\voters}
    \end{subfigure}%
    \begin{subfigure}{0.24\textwidth}
        \centering
        \includegraphics[width=\linewidth]{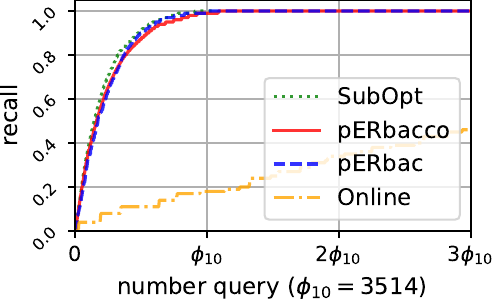}
        \caption{\synth, precision 0.5}
    \end{subfigure}%
    \begin{subfigure}{0.24\textwidth}
        \centering
        \includegraphics[width=\linewidth]{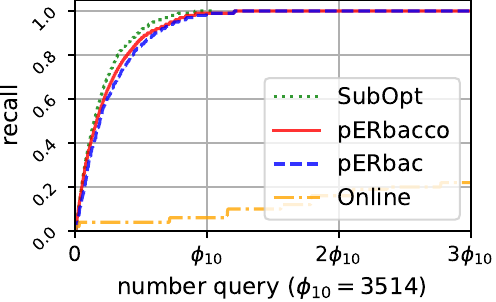}
        \caption{\synth, precision 0.2}
    \end{subfigure}%
    \begin{subfigure}{0.24\textwidth}
        \centering
        \includegraphics[width=\linewidth]{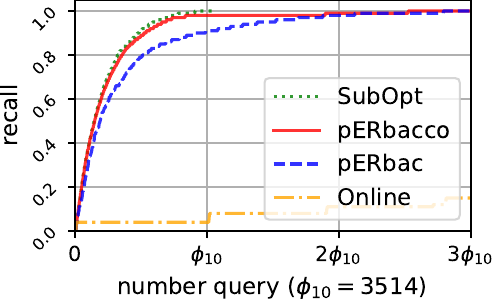}
        \caption{\synth, precision 0.05}
    \end{subfigure}

    \caption{Progressive recall on real and synthetic datasets with $\window = 10$.}
    \label{fig:progressive_recall_window=10}
\end{figure*}

\subsection{Evaluation}

We compare our approach \perbacco with the competing algorithms  \perbac and \DD described in Subsection~\ref{sub:competing_algorithms}, and with the suboptimal solution discussed in Subsection~\ref{sub:suboptimal_solution}, denoted by \subopt.

In Figure~\ref{fig:progressive_recall_window=10}, we compare the algorithms on datasets listed in Table~\ref{tab:dataset_table} with $\window$ equal to 10. For algorithm \perbacco, as discussed in Subsection~\ref{sub:communities_experiments}, we use the Louvain method as CDA, and $\lambda_w=0.05$. All algorithms are implemented with $\textit{GreedyHS}_{1000}$ in place of $\textit{GreedyHS}$ to further speed up the computation.%, as described in Subsection~\ref{sub:runtime_analysis}.

First, note that \perbac outperforms \DD on all datasets at each query, even though \perbac is obtained from \DD with a small improvement (i.e., $\mu^{\text{mean}}$ instead of $\mu^{\max}$ in Equation~\ref{eq:benefit}).

On the \funding and \wdc datasets, \perbacco outperforms both competitors at every query. On \cora, the algorithms perform similarly during the initial queries, after which \perbacco achieves higher recall. Note that, on \cora, all algorithms stop before $3\phi_{10}$ queries, as they resolve all edges in the similarity graph.

On \camera, \perbacco exhibits a slow start. Upon inspection, we observed that the initial heavy communities contain few match edges, despite having many heavy edges, which explains the slow initial behavior. As noted in Subsections~\ref{sub:graph_partitioning} and~\ref{sub:communities_experiments}, the construction of heavy communities is based on a heuristic approach. Nevertheless, after $\phi_{10}$ queries, the recall values achieved by \perbacco are higher than those of the competing algorithms.

\begin{figure}[t]
      \centering
      \includegraphics[width=0.95\columnwidth]{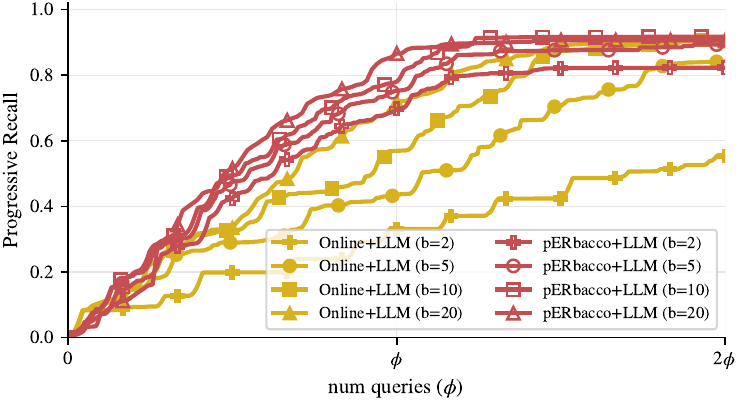}
      \caption{Progressive recall on \cora for \perbacco and \DD under the same LLM setting
  ({GPT-5 mini}, few-shot, 10 positive and 10 negative example pairs), for batch sizes $b \in
  \{2,5,10,20\}$. The x-axis reports the query budget normalized by $\phi_b$, so curves with different batch sizes are
  directly comparable up to $2\phi_b$.% (i.e., twice the number of invocations required to resolve the entire dataset).
  }
      \label{fig:run-llm}
  \end{figure}
% \begin{figure*}[t!]
%     \centering
%     % --- Prima riga ---
%     \begin{subfigure}{0.24\textwidth}
%         \centering
%         \includegraphics[width=\linewidth]{figures/funding_2.pdf}
%         \caption{\funding with $\window = 2$}
%     \end{subfigure}%
%     \begin{subfigure}{0.24\textwidth}
%         \centering
%         \includegraphics[width=\linewidth]{figures/funding_5.pdf}
%         \caption{\funding with $\window = 5$}
%     \end{subfigure}%
%     \begin{subfigure}{0.24\textwidth}
%         \centering
%         \includegraphics[width=\linewidth]{figures/funding_20.pdf}
%         \caption{\funding with $\window = 20$}
%     \end{subfigure}%
%     \begin{subfigure}{0.24\textwidth}
%         \centering
%         \includegraphics[width=\linewidth]{figures/funding_40.pdf}
%         \caption{\funding with $\window = 40$}
%     \end{subfigure}

%     \vspace{0.3cm} % spazio tra le righe

%     \caption{Progressive recall on \funding with $\window = 2,5,20$, and $40$.}
%     \label{fig:progressive_recall_funding}
% \end{figure*}

On \voters, algorithm \perbacco performs worse than \perbac because the dataset is sparse. We point out that this behavior could be predicted a priori based on the analysis reported in Subsection~\ref{sub:communities_experiments}.

On the synthetic dataset \synth, we observe that both \perbac and \perbacco are close to the suboptimal solution when the precision is 0.5 or 0.2. When the precision decreases to 0.05 -- which is more realistic at a recall of 0.95 -- \perbacco significantly outperforms \perbac.
It is worth noting that there is a huge difference between all algorithms in \camera and \synth with precision 0.05, although the distributions of communities, edges, and weights are similar. This is due to the randomness of non-match edges in \synth. In contrast, in the real dataset \camera, the similarity graph is obtained via distances between records, thus similar records that do not represent the same entity are likely to share a heavy edge.
This makes it more difficult to discover heavy communities composed mainly by match edges.

% Figure~\ref{fig:progressive_recall_funding} reports a comparison of the considered algorithms on the \funding dataset for $b = 2, 5, 20,$ and $40$. Analogous results on the remaining datasets are omitted due to space constraints.
% When $b = 2$, we consider the classical pairwise comparison setting. In this case, a 2-optimal solution exists, as established in Theorem~\ref{th:optimal} and in~\cite{firmani2016online}, and we denote it by \texttt{Opt}.
% For this setting, we note that the implementation of \DD coincides with the one presented in~\cite{firmani2016online}. Therefore, the results in the leftmost subfigure indicate that our approach improves the state of the art for pairwise Entity Resolution with an Oracle.

% Figure~\ref{fig:progressive_recall_funding} also shows that the algorithms exhibit similar behavior across different values of $b$, as the query axis is rescaled according to $\phi_b$. Overall, \perbacco consistently outperforms \DD at every query and almost always outperforms \perbac, highlighting the effectiveness of heavy communities.

Figure~\ref{fig:run-llm} reports an illustrative LLM run on \cora with GPT-5 mini, few-shot prompting, and $b \in \{2,5,10,20\}$, using the normalized query budget up to $2\phi_b$. \perbacco is largely insensitive to $b$, while \DD improves as $b$ increases; the plateau after $\phi_b$ is due to LLM errors. Handling such errors and optimizing prompt-level cost-quality tradeoffs~\cite{10597751, AvengER, nananukul2024cost} are complementary to our focus on budget-aware batch selection.

\begin{comment}

\begin{figure*}[!htb]
    \centering
    % --- Prima riga ---
    \begin{subfigure}{0.24\textwidth}
        \centering
        \includegraphics[width=\linewidth]{figures/cora_2.pdf}
        \caption{\cora}
    \end{subfigure}%
    \begin{subfigure}{0.24\textwidth}
        \centering
        \includegraphics[width=\linewidth]{figures/camera_2.pdf}
        \caption{\camera}
    \end{subfigure}%
    \begin{subfigure}{0.24\textwidth}
        \centering
        \includegraphics[width=\linewidth]{figures/funding_2.pdf}
        \caption{\funding}
    \end{subfigure}%
    \begin{subfigure}{0.24\textwidth}
        \centering
        \includegraphics[width=\linewidth]{figures/wdc80_2.pdf}
        \caption{\wdc}
    \end{subfigure}

    \vspace{0.3cm} % spazio tra le righe

    % --- Seconda riga ---
    \begin{subfigure}{0.24\textwidth}
        \centering
        \includegraphics[width=\linewidth]{figures/voters_2.pdf}
        \caption{\voters}
    \end{subfigure}%
    \begin{subfigure}{0.24\textwidth}
        \centering
        \includegraphics[width=\linewidth]{figures/synth_250_2.pdf}
        \caption{synth250}
    \end{subfigure}%
    \begin{subfigure}{0.24\textwidth}
        \centering
        \includegraphics[width=\linewidth]{figures/synth_1000_2.pdf}
        \caption{synth1000}
    \end{subfigure}%
    \begin{subfigure}{0.24\textwidth}
        \centering
        \includegraphics[width=\linewidth]{figures/synth_5000_2.pdf}
        \caption{synth5000}
    \end{subfigure}

    \caption{\blue{\window = 2}}
    \label{fig:8plots}
\end{figure*}
\end{comment}

\section{Related work}\label{sec:related_work}

\noindent\textbf{Pay-as-you-go data integration.}
The pay-as-you-go (a.k.a.\ progressive) approach to data integration was introduced to address scenarios in which integrating all data upfront is infeasible or impractical~\cite{DBLP:conf/cidr/MadhavanCDHJKY07}.
In this line of research, progressive methods for entity resolution (ER) prioritize candidate record pairs according to their estimated likelihood of being true matches, to discover as many matches as early as possible~\cite{DBLP:journals/tkde/WhangMG13, papadakis_exploration, DBLP:journals/tkde/SimoniniPPB19, DBLP:journals/tkde/PapenbrockHN15, DBLP:journals/pvldb/SimoniniZBN22}.
Active-learning ER is related because it scores candidate pairs, but its goal is to minimize labeled pairs for training a supervised matcher rather than to schedule oracle calls for progressive resolution quality~\cite{papadakis_exploration, DBLP:conf/kdd/SarawagiB02, DBLP:journals/pacmmod/GenossarGS23}. To the best of our knowledge, we propose the first study of a pay-as-you-go ER setting in which progress is achieved by querying an oracle over \emph{bounded-size sets} of records, rather than individual pairs.

\noindent\textbf{ER using an oracle.}
In the crowdsourcing literature, several works have explored ER abstractions that go beyond purely pairwise queries~\cite{waldo, crowder}.
These approaches are motivated by human-in-the-loop settings, where an oracle (the crowd) is queried to determine match relationships among small groups of records.
However, their setting is substantially different from ours. The hybrid approach in~\cite{waldo} relies on both pairwise and batch queries. Pairwise comparisons are considered more reliable and are used to resolve pairs with discordant answers across different batch queries.
The method in~\cite{crowder} processes all the batches in parallel. As a result, answers obtained from earlier queries cannot be exploited to guide subsequent batch selection, preventing a truly progressive, pay-as-you-go resolution process. Moreover, this approach does not study how to adaptively select batches under a budget, nor does it provide a formal, model-independent oracle abstraction.
%Existing crowdsourcing-based methods typically define batches upfront and process them independently or in parallel.
%As a result, answers obtained from earlier queries cannot be exploited to guide subsequent batch selection, preventing a truly progressive, pay-as-you-go resolution process.
%Moreover, these approaches do not study how to adaptively select batches under a budget, nor do they provide a formal, model-independent oracle abstraction.
Similarly, we do not consider noisy or inconsistent oracle answers in this work.
Handling errors arising from imperfect models or human annotation is a challenging problem in its own right and has been studied extensively in the context of crowdsourced ER, where the oracle abstraction is comparable (e.g.,~\cite{crowd_error_01,crowd_error_02,crowd_error_03,crowd_error_04}).
In this work, we therefore abstract away from oracle errors to focus on the algorithmic problem of budget-aware batch selection.

\noindent\textbf{ER with LLMs and Set-based models.}
Besides the seminal paper~\cite{DBLP:journals/pvldb/NarayanCOR22}, Peeters et al.~\cite{DBLP:conf/edbt/PeetersSB25} provide a comprehensive analysis of LLM-based ER, which is primarily based on pairwise entity matching.
More recently, \textsc{LLM-CER}~\cite{LLM-CER} and \textsc{ComME}~\cite{DBLP:conf/coling/WangCLCHSWZ25} show that LLMs can effectively act as \emph{bounded oracles} for ER by jointly clustering small sets of records within a single, carefully designed prompt, thus making multi-record ER feasible.
However, these works focus on the behavior of individual queries and do not address how to allocate a limited budget of LLM calls at the dataset level.

The idea of resolving entities by jointly analyzing sets of records is not specific to LLMs.
Set-based models such as Set Transformers~\cite{DBLP:conf/icml/LeeLKKCT19}, as well as vision-based identity clustering systems~\cite{DBLP:conf/cvpr/SchroffKP15}, directly operate on bounded collections of items and infer clusters without reducing the task to independent pairwise comparisons.

Our proposal is complementary to these lines of work.
Rather than introducing a new oracle implementation, we study ER as a dataset-level inference problem under budget constraints, focusing on how to progressively resolve an entire dataset by adaptively selecting bounded-size oracle queries.

\section{Conclusions}\label{sec:conclusions}

% In this paper, we studied progressive batched entity resolution, a formulation allowing an oracle to resolve batches of records instead of pairs.
% We showed that, unlike the pairwise case, an optimal sequence of batch queries does not always exist, and that selecting batches to maximize the number of newly discovered matches is NP-hard. These results formally establish the increased computational complexity of allocating oracle queries at the dataset level and generalize the results for pairwise ER.

% Our experimental evaluation on both real-world and synthetic datasets demonstrates that the proposed approach consistently achieves higher progressive recall than state-of-the-art baselines. Future work includes extending the framework to heterogeneous or noisy oracles.

This paper studies entity resolution in a pay-as-you-go setting where an oracle jointly resolves bounded-size batches of records under a limited budget. This abstraction captures a wide range of modern resolution mechanisms, including set-based~\cite{DBLP:conf/icml/LeeLKKCT19, DBLP:conf/cvpr/SchroffKP15} models, human-in-the-loop systems~\cite{crowder, waldo}, and LLMs~\cite{DBLP:conf/coling/WangCLCHSWZ25}, while remaining agnostic to the specific underlying model. We show that moving beyond pairwise queries fundamentally changes the problem: an optimal sequence of batch queries may not exist, and selecting the next batch by maximizing the estimated gain is NP-hard. These results clarify why strategies designed for pairwise ER~\cite{firmani2016online} do not directly extend to the batched setting.

Building on this analysis, we proposed a practical approach for \emph{progressive batched entity resolution} that adaptively allocates oracle calls by exploiting similarity structure and previously acquired information.
Experiments on real and synthetic datasets demonstrate that this strategy consistently achieves higher progressive recall than state-of-the-art baselines under comparable budgets, highlighting the benefits of explicitly reasoning over batches.

As a final remark, we note that our solution assumes a consistent oracle to isolate the algorithmic problem of budget-aware batch selection.
Extending the framework to handle noisy or erroneous oracle outputs, incorporating uncertainty into batch selection, and accounting for heterogeneous oracle costs are important, non-trivial directions for future research.

\section*{Acknowledgment}
This work is partly funded by the HORIZON Research and Innovation Action 101135576 INTEND ``Intent-based data operation in the computing continuum''.
The authors used OpenAI ChatGPT/Codex to assist with grammar polishing, phrasing suggestions, LaTeX and code-generation support, and automation of experiment-running scripts. The AI-assisted editing was applied throughout the manuscript to text written by the authors; the code and automation assistance supported the experiments reported in Section~\ref{sec:experiments}. The authors reviewed, verified, and approved all content, code, experimental results, claims, and conclusions, and no AI-generated text was used as substantive scientific content.

%\clearpage

\bibliographystyle{IEEEtran}
\bibliography{bib}

\end{document}